\documentclass[12pt]{article}

\usepackage[utf8]{inputenc}

\usepackage[margin=1in]{geometry}
\usepackage{setspace}

\usepackage{amsmath, amssymb, amsthm}
\usepackage{bbm}
\usepackage{dsfont}

\usepackage{graphicx}
\usepackage{booktabs}
\usepackage{multirow, array}
\usepackage{float}

\usepackage{chngcntr}
\counterwithout{figure}{section}
\counterwithout{table}{section}
\counterwithout{table}{subsection}

\usepackage[title]{appendix}

\usepackage[round]{natbib}

\usepackage[hidelinks]{hyperref}
\newtheorem{proposition}{Proposition}
\newtheorem{assumption}{Assumption}

\title{Empirical Global Games of Regime Change}

\author{Matthew J. Baker \\ Department of Economics \\ Hunter College and the Graduate Center, CUNY \\ \\ Khaled Eltokhy \\ Department of Economics \\ The Graduate Center, CUNY\\  \\ Weichao Guo \\ School of Economics \\ Fudan University \\ }

\begin{document}

\maketitle 

\begin{abstract}
Global games theory provides a tractable framework for analyzing coordination problems with multiple equilibria, with regime overthrow serving as a canonical application. A large empirical literature on coups d'\'etat examines the relationship between country-level characteristics, coup occurrence, and coup success using reduced-form approaches that leave the underlying coordination problem implicit. Bridging these literatures, we develop an estimable global games model of coups d'\'etat. The model incorporates strategic coordination into the empirical analysis of coups, employing the global games framework as an equilibrium selection device. The model distinguishes between the feasibility and desirability of regime overthrow, allowing observable fundamentals to enter separately into beliefs about regime strength and perceived gains from rebellion. The model therefore provides a theoretical basis for decomposing coup outcomes into feasibility and desirability components under maintained exclusion restrictions and equilibrium assumptions. If information on coup strength is available, the model also allows estimation of agents' uncertainty about regime strength. We show how the model can be estimated using simulated maximum likelihood coupled with a contraction mapping, and demonstrate how observable covariates map into regime strength and the perceived benefits of overthrow. We illustrate how the framework can be applied through counterfactuals varying coup benefits, regime strength, and information quality. 
\end{abstract}

\thispagestyle{empty}

\newpage

\clearpage
\pagenumbering{arabic}

\onehalfspacing

\section{Introduction}

Coordination games with multiple equilibria play a central role in the analysis of a wide range of economic phenomena. The global games framework introduced by \citet{carlsson1993} has become one of the principal tools for analyzing such environments and is now a cornerstone of modern economic theory. The canonical application, described by \citet{angeletos2007a} and referred to as the ``workhorse of the applied literature'' by \citet[p. 5328]{morrispal}, is a game of regime change: a mass of citizens simultaneously decide whether to participate in an attempt to overthrow the governing regime in their country. Because the likelihood of successful overthrow increases with the number of participants, participation decisions are complementary, and the game may admit multiple equilibria. The global games approach resolves this indeterminacy by introducing incomplete information about regime strength, yielding a unique equilibrium characterized by cutoff strategies.

We transform this canonical global game of regime change into a structural econometric model of coups d'\'{e}tat. Modeling coups in this way makes strategic coordination explicit and addresses identification and selection issues that arise in standard empirical analyses of coup occurrence and success. The framework we develop distinguishes between the desirability of overthrow, which captures the incentives of citizens to participate, and the feasibility of overthrow, which reflects regime strength. Observable covariates are allowed to enter these channels separately, generating a structural decomposition of coup risk and success. Whereas existing empirical work typically relies on reduced-form or multi-equation approaches \citep[e.g.,][]{powell2012}, our approach links observed coup attempts and outcomes directly to strategic incentives and coordination constraints. 

The paper contributes to the literature on structurally estimating game-theoretic models \citep{Bajari_Hong_Nekipelov_2013}. Global games are particularly well suited to empirical implementation because incomplete information transforms a coordination problem with potential multiple equilibria into one with a unique equilibrium, thereby generating a robust and estimable mapping from model parameters to observed outcomes. Following \citet{tamer03}, we view this coherence as a key virtue of the framework from an econometric standpoint.

The remainder of the paper proceeds as follows. Section \ref{litreview} reviews the theoretical global games literature and the empirical literature on coups d'\'{e}tat. Section \ref{theorysec} develops the empirical version of the model. Section \ref{empact} presents estimation results for two specifications: a limiting case in which information about regime strength becomes arbitrarily precise and a more general model that estimates informativeness directly. Section \ref{app} presents some simple counterfactual exercises, and Section \ref{conc} concludes.

\section{Literature} \label{litreview}

Our work is motivated by two complementary ideas. The first is that the theory of global games has emerged as one of the most influential approaches for analyzing coordination problems with multiple equilibria, as it provides a tractable and robust framework for equilibrium selection in environments characterized by strategic complementarities and incomplete information. The second is that questions of regime stability and regime change are among the most important coordination problems in political economy, and while some theory has studied these problems from a global games perspective, the empirical implications of adopting this perspective on regime change are not clear. This paper seeks to bring these two strands of research together by applying the insights of global games to the empirical analysis of coups and regime change.

The global games approach demonstrated that even small amounts of incomplete information can fundamentally alter coordination outcomes, yielding sharp comparative statics and transforming coordination games from exercises in equilibrium selection into models with empirically meaningful predictions. The foundational contributions by \citet{carlsson1993} and \citet{morris2003} established the global-games framework as a tractable approach to equilibrium selection under incomplete information, highlighting how collective outcomes depend jointly on underlying fundamentals and agents' beliefs about the actions of others. Much subsequent work has applied these ideas to political transitions and regime change, emphasizing the role of information, beliefs, and strategic complementarities in collective action \citep{angeletos2007a,edmond13}. More broadly, the political economy literature has analyzed regime instability, democratization, and political conflict as the outcome of strategic interactions between governments and citizens. For example, \citet{acemoglu2001extension} develop a theory of political transitions based on the distribution of political power, while \citet{acemoglu2006economic} provide a broader framework linking economic conditions, political institutions, and regime change. Related work examines the incentives facing military elites and the persistence of political institutions, highlighting mechanisms through which regimes may remain vulnerable to overthrow or resist political change \citep{acemoglu2010revolution,acemoglu2008income}.

A more specialized branch of this literature applies coordination-based frameworks directly to understanding the driving forces behind coups, protests, elections, and autocratic instability. \citet{Little2017} develops a model of coups as a coordination problem under incomplete information, while \citet{Little2015,little2016,LittleEtAl2015}, \citet{hollyer2015}, are among those who have examined how information, communication, and transparency affect collective political action. More recent contributions have extended these ideas to protests, coups, and regime change, emphasizing the role of strategic uncertainty in political instability \citep{casper2014,TysonSmith2018,bueno22,morshed23,kocak24}. Collectively, these studies illustrate how coordination-based approaches have been applied to a wide range of political environments in which collective action depends on both fundamentals and beliefs.

While the aforementioned theoretical literature emphasizes strategic coordination, information, and beliefs, the empirical literature on coups has developed more or less independently. Most empirical work focuses on identifying the political, economic, and institutional correlates of coup activity using cross-national data, with comparatively little attention paid to the coordination mechanisms emphasized in theoretical models. Rather than modeling strategic interaction directly, this literature has generally sought to identify the observable political, economic, and institutional conditions associated with coup activity.

The quantitative analysis of coups includes early quantitative contributions by \citet{jackman1978}, \citet{muller1970}, and \citet{muller1985}, who examined the political, economic, and institutional determinants of military intervention and political instability using cross-national data. The modern empirical literature on coups is often traced to the influential contribution of \citet{londregan1990}, which developed a systematic political-economy framework for analyzing coup occurrence. More recent work has assembled cross-national panel data and developed empirical frameworks for studying both coup incidence and coup outcomes. For example, \citet{BelkinSchofer2003} examines the institutional and political determinants of coups, while \citet{PowellThyne2011} provides a comprehensive cross-national dataset that has become a standard resource in the field. More recent research has benefited from improved data collection and increasingly sophisticated empirical methods while remaining focused on the observable correlates of coup activity \citep{chin21, cebotari_political_2024}.

One strand of this literature focuses primarily on the determinants of coup occurrence. Studies have emphasized political instability, institutional weakness, regime durability, elite competition, economic performance, and military organization as important predictors of coup activity \citep{JenkinsKposowa1992,BelkinSchofer2003,BjornskovRode2020,BodeaElbadawiHoule2017}. A related literature examines coup-proofing and other strategies through which regimes seek to reduce coup risk, including military organization, patronage, repression, foreign assistance, and information management \citep{quinlivan99,Leon2014,Sudduth2017,TysonSmith2018,Masaki2016,AllenEtAl2022,ChiozzaKhalifa2024}. Although there are many variations and no single specification has emerged as standard, these studies consistently emphasize the importance of economic conditions, political institutions, and regime strategies in shaping the likelihood of military intervention.

Although some work that uses a simultaneous equations approach to studying coup occurrence and success \citep{powell2012}, a large literature focuses its attention on the determinants of coup success conditional on an attempt. While studies of coup incidence focus primarily on the incentives for rebellion, this literature emphasizes the feasibility of overthrow and the factors that determine whether collective action succeeds once initiated. Historical and comparative studies stress the importance of institutional context, military organization, and long-run political development. For example, \citet{OKane1981} highlights the role of state structures in shaping coup outcomes, while \citet{Darkwa2023}, \citet{AbiddeKumahAbiwu2023}, and \citet{ChambruManeuvrierHervieu2024} examine how political institutions and historical legacies influence regime resilience. Although primarily theoretical in orientation, \citet{Little2017} further emphasizes how beliefs about regime durability and the actions of others can shape the success or failure of attempts at regime overthrow. While this literature has generated important insights into both coup attempts and coup success, it generally lacks a unified framework that explains these outcomes as jointly determined by the same strategic environment. Consequently, the relationship between the determinants of coup occurrence and those of coup success remains somewhat theoretically underdeveloped.

Despite their common focus on regime overthrow, the theoretical literature on coordination and the empirical literature on coups have developed largely independently. Theoretical models emphasize strategic complementarities, information, beliefs, and equilibrium selection, while empirical studies typically estimate reduced-form relationships between observable covariates and coup outcomes. Empirical applications of global-games models remain comparatively rare and have generally focused on deriving testable implications rather than estimating the underlying strategic environment directly.\footnote{Empirical work directly evaluating global-game predictions has largely taken the form of laboratory experiments. For example, \citet{HEINEMANN2024632} examines whether global-game equilibrium selection predicts behavior in coordination environments and finds only limited support, while \citet{helland21} shows that boundedly rational alternatives based on \textit{k}-level reasoning often better explain observed behavior.} As a result, the mechanisms that generate coup attempts and determine their success are often treated implicitly rather than modeled directly.

We seek to bridge these gaps in this paper. Regime overthrow is arguably the canonical application of global games, yet most applications to political instability remain theoretical. We exploit the equilibrium structure of a global game to construct an estimable model of coup dynamics.\footnote{A small empirical literature has applied coordination-based methods to settings such as bank runs, debt runs, and currency attacks. For example, \citet{schroth2014} structurally estimates a dynamic debt-run model using evidence from the asset-backed commercial paper crisis, while \citet{cipriani2024} use high-frequency payment data to identify bank runs and distinguish the roles of fundamentals and coordination. Although these settings share important strategic features with the present paper, they focus on different empirical environments and do not attempt to estimate a global-games model of regime overthrow.} The framework decomposes coup activity into two latent components: the desirability of overthrow and the feasibility of overthrow. Existing empirical approaches typically combine these channels into reduced-form estimates of coup occurrence and success. By separating them structurally, the model provides a way to distinguish changing incentives for rebellion from changes in regime resilience. In doing so, the paper contributes to the literature on structural estimation of strategic interaction \citep{Bajari_Hong_Nekipelov_2013} and to the broader econometric literature on identification in latent-variable and selection models \citep{heckman78,vytlacil2002,matzkin2007,tamer03}. Unlike many empirical games with multiple equilibria, however, the global-games framework provides equilibrium selection directly \citep{carlsson1993,morris2003}, yielding coherent empirical predictions and a tractable basis for estimation.

This distinction is particularly relevant given long-run trends in coup dynamics, a fuller discussion of which we reserve for Section \ref{sec:est}. Although the frequency of coup attempts has declined over time, conditional success rates have remained comparatively stable. Existing empirical approaches provide limited guidance for interpreting this pattern: fewer coups may reflect stronger regimes, weaker incentives for rebellion, changing information environments, or shifts in coordination technology. Our framework provides a way to distinguish among these competing explanations. We now turn to the canonical model of regime overthrow and how we adapt it for empirical implementation.

\section{Theory} \label{theorysec}

To maintain close correspondence with the cross-country panel literature on coups d`\'{e}tat and with our empirical analysis, we model each country-year as an independent two-stage game, generating a panel of strategic interactions across countries $c$ and time periods $t$. At each place and time, the game unfolds in two stages: a desirability stage followed by a participation stage. This game is played by the mass of citizens in the country at the time.\footnote{To simplify the exposition, we refer to potential coup participants as ``citizens.'' The analysis, however, applies more generally to any collection of agents with an interest in regime change, including military officers, political elites, organized opposition groups, or members of the broader population. The model should therefore not be strictly interpreted as requiring mass participation by society as a whole.} In the initial desirability stage, citizens observe the mean benefits $b_{ct}$ that would accrue to participants if the incumbent regime were overthrown in a coup d'\'{e}tat. If $b_{ct}\leq 0$, a coup is not beneficial to the potential participants, no coup occurs, and the game ends.

If $b_{ct}>0$ a coup attempt occurs. In our framework, this means that the game proceeds to the participation stage, at which point the game takes on the typical characteristics of a global game, following the canonical setup described in \citet{angeletos2007a}. The condition $b_{ct}>0$ can be viewed as a necessary condition for a global game of regime change to take place at all, and as a way of modeling whether a coordination game of regime overthrow materializes. 

The ensuing global game setup is as usual. Each citizen in the unit mass of citizens indexed by $i \in [0,1]$, must choose whether to participate in the coup d'\'{e}tat,  where $a_{ict}\in\{0,1\}$ denotes the choice of each citizen to not participate ($a_{ict}=0$) or participate ($a_{ict}=1$). The total mass of citizens participating is $A_{ct}=\int_0^1a_{ict}di$, and the success of the coup depends on the size of the coup relative to the strength of the regime, which is captured by a state variable $\theta_{ct} \in (-\infty,\infty)$. If $A_{ct}\geq \theta_{ct}$, the coup is successful, so larger values of $\theta_{ct}$ correspond with stronger regimes. It follows that some coups are inevitable ($\theta_{ct}\leq 0$), some are impossible ($\theta_{ct}>1$), and some are only possible with sufficient participation ($\theta_{ct} \in (0,1))$. Participation in a successful coup yields participants the aforementioned benefit $b_{ct}$.\footnote{One may interpret $b_{ct}$ as the mean benefit of a successful coup, with individual-specific returns realized only after the coup succeeds if one wishes to allow for heterogeneity in payoffs.} If the coup fails, the participating citizen receives zero benefits and instead incurs cost $k_{ct}$.\footnote{The payoffs to an agent therefore have an expected utility structure of the form $Pb-(1-P)k$, where $P$ is the probability of a successful coup, $b$ are benefits, and $k$ are costs incurred when the coup fails. This setup has been applied to understanding coups in the applied literature as in, for example \citet{gassebner2016expect}.}  Payoffs for agent $j$ in country $c$ at time $t$ from participating in a coup given that $b_{ct}>0$ can then be written as:

\begin{equation} \label{payoffs}
u_{jct}
=
a_{jct}
\left[
b_{ct}\mathbf{1}(A_{ct}\geq\theta_{ct})
-k_{ct}\mathbf{1}(A_{ct}<\theta_{ct})
\right] , \quad 
A_{ct}
=
\int a_{ict}\,di
\end{equation}

Introducing uncertainty about regime strength $\theta_{ct}$ yields the familiar global games environment in which equilibrium behavior is characterized by a cutoff rule. We assume variability in regime strength and uncertainty about it follows the normal-normal structure commonly employed in the global games literature (see, for example, \citet{morris2003} and \citet{angeletos2007a}), as described in Assumption \ref{nnmod}:

\begin{assumption}[Normal--Normal Model] \label{nnmod}
The strength of the regime, $\theta_{ct}$, is distributed according to
\[
\theta_{ct}\sim N(z_{ct},\tau_{ct}^{2}),
\]
where $z_{ct}$ denotes the mean regime strength and $\tau_{ct}>0$ its standard deviation.

Each citizen $i$ observes an i.i.d. private signal of realized regime strength given by
\[
x_{ict}\sim N(\theta_{ct},\sigma_{ct}^{2}),
\]
where $\sigma_{ct}$ denotes the standard deviation of the idiosyncratic signal noise. 
\end{assumption}

We discuss this point further in Section \ref{empact}, but the normal-normal specification is chosen not merely for analytical tractability. More importantly, it induces latent index representations that can be exploited for estimation in much the same way as in standard discrete choice models. The Bayes-Nash equilibrium strategy of this game is for citizens to follow a common cut-off strategy: participate in the coup and attack the regime if $x_{ict}\leq x_{ct}^*$. \citet{morris2003} show this result and in addition describe the conditions under which the strategy is unique.\footnote{The key assumptions \citep[p. 65]{morris2003} are: payoffs must be monotone in the average action and the state parameter $\theta$, there must be a unique $\theta$ at which payoffs are zero for any average action, there must be threshold values of $\theta$ for which participating and not participating are dominant strategies, and payoffs must be continuous.} While all the usual conditions apply here, the requirement that the signal standard deviation $\sigma_{ct}$ be sufficiently small relative to $\tau_{ct}$ plays a special and nuanced role in our empirical implementation, so we shall discuss this assumption further below. 

We characterize equilibrium in the usual way \citep{angeletos2007a, edmond13}. Attack size conditional on a regime strength $\theta_{ct}$ is determined by the fraction of agents who receive a signal lower than the cutoff signal $x_{ct}^*$. Thus, conditional on $\theta_{ct}$ and a decision rule, attack size is:
\begin{equation} \label{attack}
    A_{ct} = \Phi\left[\frac{x_{ct}^* - \theta_{ct}}{\sigma_{ct}}\right]
\end{equation}

Where $\Phi[Z]$ is the cumulative standard normal function. If $A_{ct} \geq \theta_{ct}$, the regime is overthrown, while if $A_{ct} <\theta_{ct}$, the regime survives. As the right-hand side of equation (\ref{attack}) is strictly decreasing in $\theta_{ct}$, there exists a unique cutoff value of $\theta^*_{ct}$ given $x_{ct}^*$: 
\begin{equation} \label{attack_fail}
    \Phi\left[\frac{x_{ct}^* - \theta_{ct}^*}{\sigma_{ct}}\right]=\theta^*_{ct}
\end{equation}

The critical value $\theta_{ct}^*$ divides values of $\theta_{ct}$ for which the regime survives ($\theta_{ct}>\theta_{ct}^*$) from values for which it does not ($\theta_{ct}\leq\theta^*_{ct})$. 

From the perspective of the individual citizen, this reduces the problem of predicting the actions and information of the mass of other citizens to an assessment of the value of $\theta_{ct}$ given his or her information $x_{ict}$. Given Assumption~\ref{nnmod}, the citizen updates his or her beliefs about regime strength $\theta_{ct}$ upon receiving signal $x_{ict}$ in standard Bayesian fashion, so that his or her beliefs about $\theta_{ct}$ are characterized by the posterior distribution:

\begin{equation} \label{conddist}
    \theta_{ct}|x_{ict} \sim N\left(\frac{\sigma_{ct}^2z_{ct} + \tau_{ct}^2x_{ict}}{\sigma_{ct}^2+\tau_{ct}^2},\frac{\sigma_{ct}^2\tau_{ct}^2}{\sigma_{ct}^2+\tau_{ct}^2} \right)
\end{equation}

Following the distribution in (\ref{conddist}), a citizen receiving signal $x_{ict}$ has expectation of the likelihood that a coup is successful of:

\begin{equation} \label{expprob}
    P(\theta_{ct}\leq\theta_{ct}^*|x_{ict}) = \Phi\left[\frac{\theta_{ct}^*-\frac{\sigma_{ct}^2z_{ct}+\tau_{ct}^2x_{ict}}{\sigma_{ct}^2+\tau_{ct}^2}}{\sqrt{\frac{\sigma_{ct}^2\tau_{ct}^2}{\sigma_{ct}^2+\tau_{ct}^2}}}\right]
\end{equation}

Let $P^*_{ct}=P(\theta_{ct}\leq\theta_{ct}^*|x_{ct}^*)$, so $P^*_{ct}$ is the agent's belief about the probability of a successful coup when the agent receives exactly the cutoff signal $x_{ct}^*$.  At the critical value of $x^*_{ct}$, the agent must be indifferent between engaging in a coup or not. Given payoffs in equation (\ref{payoffs}), this requires that:
\begin{equation} \label{exppayoff_eq}
P^*_{ct}b_{ct} - (1-P^*_{ct})k_{ct}=0\quad\Rightarrow\quad P^*_{ct} = \frac{1}{1+\frac{b_{ct}}{k_{ct}}}
\end{equation}

Substituting from (\ref{expprob}) into (\ref{exppayoff_eq}) and simplifying gives us the following expression defining the cutoff value of $\theta_{ct}^*$ given $x_{ct}^*$: 
\begin{equation} \label{crittheta}
\Phi\left[\frac{\frac{\sigma_{ct}}{\tau_{ct}}(\theta_{ct}^*-z_{ct}) + \frac{\tau_{ct}}{\sigma_{ct}}(\theta^*_{ct}-x_{ct}^*)}{\sqrt{\sigma_{ct}^2+\tau_{ct}^2}}\right]=\frac{1}{1+\frac{b_{ct}}{k_{ct}}}
\end{equation}
which, together with equation (\ref{attack_fail}) jointly determines $x_{ct}^*$ and $\theta_{ct}^*$. 

Equation (\ref{attack_fail}) and (\ref{crittheta}) also implicitly define equilibrium relationships between $z_{ct}$ and $b_{ct}$ and the condition for a coup to be successful ($\theta_{ct}\leq\theta_{ct}^*$). To take the model to the data, additional structure is required to distinguish empirically between the incentives for overthrow and the feasibility of successful collective action. As is often the case, identification requires a combination of normalization assumptions, exclusion restrictions, and parametric structure. We interpret the latent objects $z_{ct}$, $b_{ct}$, and in some cases $\sigma_{ct}$ as functions of observable country characteristics and impose normalizations on remaining model parameters ($k_{ct}$ and $\tau_{ct}$). This allows model terms to be estimated using observed coup attempts and success rates while preserving the structural interpretation of the underlying coordination problem. The resulting framework provides a way to distinguish whether observable determinants of coups operate primarily through changes in the expected benefits of overthrow or through changes in regime resilience and coordination feasibility.

\section{Empirical implementation}  \label{empact}

To develop an empirical implementation of the model, we adopt the standard panel framework used in the empirical literature on coups d'\'{e}tat. In the typical cross-country coup dataset (e.g., \citet{PowellThyne2011}) a panel of countries $c=\{1,2,\dots,C\}$ over years $t=\{1,2,\dots,T\}$ is observed. For each country-year, whether a coup attempt occurred during the year is observed, and, conditional on occurrence, whether the attempt succeeded is also known.\footnote{In actuality, more than one coup attempt might occur in a year; we describe how we address this situation in section \ref{sec:est}.} This aligns with our setup of the desirability-participation stage game in the previous section. Let $Y_{ct}\in\{0,1\}$ indicate coup occurrence and let $S_{ct}\in\{0,1\}$ indicate coup success. It is also possible that there may be additional information available about coups should they occur, and indeed, the typical coup dataset usually includes some additional information about the nature of the coup. Accordingly, we also consider a variation of the model in which coups are observed to differ in participation rate, and adapt some of the typical information about coups in the typical coup dataset to our model. The availability of such information has important implications for the identification and estimation of the $\sigma_{ct}$ term describing the variability in citizens' signals about regime quality. 

The model is characterized by the country-year specific fundamentals $z_{ct}$, $b_{ct}$, and $\sigma_{ct}$, and also $\kappa_{ct}$ and  $\tau_{ct}$. These denote, respectively, regime strength, the expected benefits of a successful coup to participants, the dispersion of agent's private information about regime strength, the expected costs associated with participation in a failed coup, and the dispersion of regime strength about its mean. Together, these parameters characterize both the desirability and feasibility of regime overthrow. Once they are known, other quantities of interest, such as the latent regime strength $\theta_{ct}$, participation thresholds, and equilibrium coup probabilities, can be recovered analytically or through simulation.

The task at hand is therefore to identify these latent components from observed coup attempts and outcomes. We model regime strength, coup benefits, and information quality as latent indices composed of observable covariates and unobserved components with specified distributions. Identification relies on the equilibrium restrictions imposed by the global-games framework, scale normalizations on the latent indices, and differential entry of observable covariates into the desirability and feasibility equations. Combined with the normal-normal information structure, these assumptions generate a tractable mapping from observed country characteristics to the probabilities of coup attempts and coup success.

In addition to Assumption \ref{nnmod} in section \ref{theorysec}, we adopt a linear index structure for mean coup benefits and mean regime strength based on a set of potential explanatory variables $X_{ct}$, as detailed in Assumptions \ref{meanb} and \ref{meanz}.

\begin{assumption}\label{meanb}
The latent desirability of regime overthrow is given by the index

\[
b_{ct}=X^b_{ct}\beta +\epsilon^b_{ct},
\]

where $X^b_{ct}\subseteq X_{ct}$ is a vector of observable covariates, $\beta$ is a conformable parameter vector, and

\[
\epsilon^b_{ct}\sim N(0,1).
\]

Thus $b_{ct}$ consists of an observed component, $X^b_{ct}\beta $, and an unobserved component, $\epsilon^b_{ct}$, implying

\[
b_{ct}\mid X^b_{ct}
\sim
N( X^b_{ct}\beta,1).
\]

The unit variance normalization fixes the scale of the latent benefit index.
\end{assumption}

\begin{assumption}\label{meanz}
The mean regime-strength index is given by
\[
z_{ct}= X^z_{ct}\zeta,
\]
where $X^z_{ct}\subseteq X_{ct}$ is a vector of observable covariates and $\zeta$ is a conformable parameter vector. Realized regime strength is given by
\[
\theta_{ct}=z_{ct}+u^\theta_{ct},
\qquad
u^\theta_{ct}\sim N(0,1).
\]
Equivalently,
\[
\theta_{ct}\mid X^z_{ct}
\sim
N(X^z_{ct}\zeta ,1).
\]
The unit variance normalization fixes the scale of the latent regime-strength index and implies $\tau_{ct}=1$.
\end{assumption}
The normal-normal information structure adopted in Assumption \ref{nnmod} can therefore be understood as a tractable reduced-form representation of the information environment facing potential coup participants. While such assumptions are commonly used in the global-games literature as a modeling device, their interpretation here is closer to that of latent-index models in empirical discrete-choice settings. Just as utility scales are identified only up to normalization, the level of uncertainty in the model is meaningful only relative to the scale of the underlying fundamentals. Consequently, $\sigma_{ct}$ is best viewed as a relative measure of information quality and the informativeness of the coordination environment rather than as the literal variance of a common signal received by all agents.

The independence and unit-variance assumptions imposed on the latent disturbances should be interpreted primarily as identifying normalizations rather than as substantive claims about political instability. The model explicitly permits common observable shocks to affect both the desirability and feasibility of regime overthrow through covariates that enter both latent indices. For example, economic conditions, institutional characteristics, and political factors may simultaneously influence expected coup benefits and regime strength. The independence assumption applies only to the residual components of these indices after conditioning on observables. Allowing unrestricted covariance or heteroskedasticity in the latent disturbances would introduce additional parameters whose identification from observed coup occurrence and coup outcomes alone is unclear. Indeed, a recurring theme of our approach is to make maximal use of information already available in existing coup datasets rather than to require the collection of entirely new forms of data. Where additional information is needed for identification, we seek to rely on simple extensions of variables already commonly coded in the empirical literature.
In any event, the unit variances therefore serve the conventional role of fixing the scale of the latent indices, analogous to standard latent-variable and discrete-choice models.

Assumption \ref{params} pins down relative benefits and costs by effectively defining $b_{ct}$ to be a cost-benefit ratio.

\begin{assumption} \label{params} $k_{ct}=1$.
\end{assumption}

Recall that our theory describes coups using a two-stage game where in the first stage, the desirability of a coup is determined by the expected benefits accruing to participants in a successful coup, denoted $b_{ct}$, and in the second stage, citizens decide whether to participate given these expected benefits. Under Assumption \ref{meanb}, a coup attempt occurs whenever the mean benefit from participation is nonnegative, so that coup occurrence may be represented as the event

\begin{equation}
Y_{ct}=\mathbf{1}(b_{ct}\geq 0).
\end{equation}

Combined with Assumption \ref{meanb}, this characterization implies a probit structure for coup attempts, as the probability that a coup attempt is observed in country $c$ at time $t$ is simply
\begin{equation}
\Pr\left( X^b_{ct}\beta+\epsilon^b_{ct}\geq 0\right)=
\Phi\left( X^b_{ct}\beta\right),
\end{equation}
while the probability that no coup attempt occurs is $\Phi\left(- X^b_{ct}\beta\right)$. Note that this assumption plays a role akin to an exclusion restriction, in that whether a coup attempt observed depends solely on $b_{ct}\geq 0$, and therefore plays a critical role in identifying the components of $b_{ct}$. 

Conditional on a coup attempt, the model proceeds to the participation stage, which determines the equilibrium threshold $\theta_{ct}^*$ governing coup success. To find this threshold, first solve equation (\ref{attack_fail}) for $x_{ct}^*$ to give:
\begin{equation} \label{invert}
x_{ct}^*=\theta_{ct}^*+\sigma_{ct}q(\theta_{ct}^*)
\end{equation}
where $q(Z)$ is the normal quantile function. Substitute (\ref{invert}) into (\ref{crittheta}), use the normalization $\tau_{ct}=1$ from Assumption \ref{meanz} and $k_{ct}=1$ from Assumption \ref{params} to get:

\begin{equation} \label{critthetarf}
\Phi\left[\frac{\sigma_{ct}(\theta_{ct}^*-z_{ct}) - q(\theta^*_{ct})}{\sqrt{\sigma_{ct}^2+1}}\right]=\frac{1}{1+b_{ct}}
\end{equation}

Given values for $b_{ct}$, $z_{ct}$, and $\sigma_{ct}$, the threshold $\theta_{ct}^*$ can be calculated from equation (\ref{critthetarf}). Under our parametric assumptions, the probability of coup success is then simply the probability that $\theta_{ct}\leq\theta_{ct}^*$, namely 

\begin{equation}
\Pr(\theta_{ct}\leq\theta_{ct}^*)=
\Phi\left(\theta_{ct}^*-z_{ct}\right).
\end{equation}

The threshold $\theta_{ct}^*$ is itself a function of the expected benefits from a coup, $b_{ct}$, so that the probability of success depends on both the desirability and feasibility of regime change. Because this dependence is closely related to the broader identification and estimation strategy, we defer further discussion until the next subsection. We now address the related issue of the identification of the dispersion of information parameter $\sigma_{ct}$.

In principle, information on coup participation or attack size can be used to identify this parameter. Equation (\ref{attack}) links the observed size of an attack to the precision of citizens' private information, so that participation data would provide direct information about $\sigma_{ct}$. In the absence of such information, however, $\sigma_{ct}$ cannot be separately recovered from observed coup attempts and outcomes. Put differently, the model may be viewed as a system of two latent equations: one governing participation and attack size, and a second governing coup success conditional on an attack. Identification of $\sigma_{ct}$ relies on observing both margins. 

This observation suggests two alternative empirical specifications, depending on whether only coup success is observed, or both success and attack strength are observed. Our first specification follows a scenario in which only success or failure of a coup is observed conditional on occurrence, and follows the common approach in the theoretical literature in which the limiting case $\sigma_{ct}\rightarrow 0$ \citep{morris2003} is used in the analysis. As shown by \citet{morris2003}, equilibrium uniqueness is preserved in this limit while the characterization of equilibrium behavior becomes considerably simpler. Econometrically, the resulting model depends only on the latent benefit and regime-strength components, yielding a parsimonious framework for studying coup attempts and success.

The second specification follows \citet{angeletos2007a} and \citet{edmond13} in treating information quality as an economically meaningful component of the model. In this case, $\sigma_{ct}$ is estimated directly and measures the precision with which citizens observe regime strength. The parameter therefore admits a natural interpretation as the quality of information available to potential coup participants and may itself be of substantive interest. Accordingly, we consider both specifications in the empirical analysis: a ``simple'' model imposing $\sigma_{ct}\rightarrow 0$ and a more general model (the ``full'' model) in which $\sigma_{ct}$ is estimated alongside the remaining latent components.

In circumstances in which we wish to estimate $\sigma_{ct}$, we must deploy an additional assumption to ensure model coherence:

\begin{assumption} [Uniqueness of $\theta_{ct}^*$] \label{sigass} $\sigma_{ct} \leq \sqrt{2\pi}$. \label{unique}
\end{assumption}

Assumption \ref{sigass} is standard in the global games literature; see, for example, the discussion in \citet[Section 3.1]{morris2003}. If Assumption \ref{sigass} is violated, equation (\ref{crittheta}) may admit multiple solutions: as \citet[p. 82]{morris2003} put it, the model hits ``the multiplicity zone'' and in context means that no longer is the model selecting unique equilibria.\footnote{To see uniqueness, rewrite (\ref{crittheta}) as
\[
\sigma_{ct}\theta^*_{ct}-q(\theta^*_{ct})
=
\sigma_{ct}z_{ct}
+
\sqrt{1+\sigma_{ct}^2}
q\!\left(\frac{1}{1+b_{ct}}\right),
\]
where the right-hand side is constant. Let $x=q(\theta^*_{ct})$, so that $\theta^*_{ct}=\Phi(x)$. The left-hand side becomes $\sigma_{ct}\Phi(x)-x$, with derivative $\sigma_{ct}\phi(x)-1$. Since $\phi(x)\leq 1/\sqrt{2\pi}$, the derivative is negative for all $x$ whenever $\sigma_{ct}<\sqrt{2\pi}$, implying a unique solution.} While the assumption is needed for parametric identification of $\sigma_{ct}$, this leaves the issue of nonparametric identification of the variance parameter with what is effectively a model with discrete outcomes. In the next section, we first develop a simple version of the model in which $\sigma_{ct}\rightarrow 0$. In section \ref{fullmodel}, we discuss our approach to the identification of $\sigma_{ct}$ in greater detail. 

\subsection{Simple model} \label{simpmodel}

In this version of the model, $\sigma_{ct}$ is assumed to approach zero. As is shown in \citet{morris2003}, this results in a simple structure for both the critical value of the state $\theta^*_{ct}$ and the structure of the model. The limiting model should not be viewed simply as a theoretical approximation. Rather, it corresponds to the empirically relevant case in which only coup occurrence and coup success are observed. In the absence of information on attack size, the information-dispersion parameter $\sigma_{ct}$ is not separately identified, making the limiting model a natural benchmark specification. At the same time, the limiting case $\sigma_{ct}\rightarrow 0$ plays a central role in the theoretical global games literature, where it serves as a canonical benchmark to characterize equilibrium behavior while preserving the equilibrium-selection properties of incomplete information \citep{morris2003,angeletos2007a}.

Proposition \ref{simplecoup} characterizes the result as manifest in the current circumstances:

\begin{proposition} \label{simplecoup} Under Assumptions 1, 2, and 3, and allowing $\sigma_{ct} \rightarrow 0$, when $b_{ct}>0$:
\begin{equation}
\theta_{ct}^* =\frac{b_{ct}}{1+ b_{ct}}.   
\end{equation}
Then  
\begin{itemize}
    \item  When $\theta_{ct}\leq\theta_{ct}^*$, $A_{ct}=1$ and the coup is successful, and
    \item When $\theta_{ct}>\theta_{ct}^*$, $A_{ct}=0$ and the coup fails. 
\end{itemize} 

\end{proposition}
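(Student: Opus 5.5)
The plan is to start from the two equilibrium conditions already derived, (\ref{attack_fail}) and (\ref{critthetarf}), and take the limit $\sigma_{ct}\to 0$ directly in the reduced form. Under Assumptions 2 and 3 we have $\tau_{ct}=1$ and $k_{ct}=1$, so the threshold solves
\[
\Phi\left[\frac{\sigma_{ct}(\theta_{ct}^*-z_{ct}) - q(\theta^*_{ct})}{\sqrt{\sigma_{ct}^2+1}}\right]=\frac{1}{1+b_{ct}} .
\]
Fix $b_{ct}>0$. For each small $\sigma_{ct}$, Assumption \ref{sigass} holds automatically, so the footnote's monotonicity argument gives a unique solution $\theta^*_{ct}(\sigma_{ct})\in(0,1)$. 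I will then show that $\theta^*_{ct}(\sigma_{ct})$ converges and identify the limit. Formally, letting $\sigma_{ct}\to 0$ with $\theta^*_{ct}$ bounded in $(0,1)$ reduces the equation to $\Phi[-q(\theta^*_{ct})]=1-\theta^*_{ct}=1/(1+b_{ct})$. This gives $\theta^*_{ct}=b_{ct}/(1+b_{ct})$.

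To make the limit rigorous, I would use the footnote's rewriting
\[
\sigma_{ct}\Phi(x)-x=\sigma_{ct}z_{ct}+\sqrt{1+\sigma_{ct}^2}\,q\!\left(\tfrac{1}{1+b_{ct}}\right),\qquad x=q(\theta^*_{ct}).
\]
The left side differs from $-x$ by at most $\sigma_{ct}$. The right side converges to $q(1/(1+b_{ct}))$, because $z_{ct}$ is fixed. Hence $x\to -q(1/(1+b_{ct}))=q(b_{ct}/(1+b_{ct}))$, and by continuity of $\Phi$ we get $\theta^*_{ct}\to b_{ct}/(1+b_{ct})$. This is the main technical step. Its only subtlety is ensuring that $x$ stays bounded, which the displayed bound provides directly, so I expect it to go through cleanly. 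This $\sigma$-by-$\sigma$ argument is why I prefer working with the equation rather than invoking the general Laplacian-belief result of \citet{morris2003}, although that result gives the same answer: the marginal agent's belief about the share of attackers is uniform, so $P(\text{success})=\theta^*$ must equal $1-1/(1+b)$.

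For the two bullet points, I would use (\ref{invert}), $x^*_{ct}=\theta^*_{ct}+\sigma_{ct}q(\theta^*_{ct})$, together with (\ref{attack}). This gives
\[
A_{ct}=\Phi\!\left[\frac{\theta^*_{ct}-\theta_{ct}}{\sigma_{ct}}+q(\theta^*_{ct})\right].
\]
Since $\theta^*_{ct}$ converges to a limit in $(0,1)$, the term $q(\theta^*_{ct})$ stays bounded. If $\theta_{ct}<\theta^*_{ct}$, the argument of $\Phi$ diverges to $+\infty$, so $A_{ct}\to1\geq\theta_{ct}$ and the coup succeeds. If $\theta_{ct}>\theta^*_{ct}$, the argument diverges to $-\infty$, so $A_{ct}\to0<\theta_{ct}$ and the coup fails. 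Here $\theta_{ct}>\theta^*>0$, which makes the failure strict.

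The boundary case $\theta_{ct}=\theta^*_{ct}$ has probability zero under the normal prior. I would assign it to success using the convention $\theta_{ct}\le\theta^*_{ct}$ from (\ref{attack_fail}), which is consistent with $A=\theta^*$ there. The required case split is therefore on the sign of $\theta_{ct}-\theta^*_{ct}$, and the limiting cutoff follows from the equation-based argument above.
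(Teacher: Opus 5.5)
Your proposal is correct and follows the paper's route: the paper's proof takes $\sigma_{ct}\to 0$ in the reduced threshold condition and uses $\Phi(-q(\theta^*_{ct}))=1-\theta^*_{ct}=1/(1+b_{ct})$, and your convergence bound via the footnote's rewriting, together with your derivation of both bullets from $A_{ct}=\Phi[(\theta^*_{ct}-\theta_{ct})/\sigma_{ct}+q(\theta^*_{ct})]$, supplies rigor that the paper's one-line proof omits. One slip in your Laplacian aside: the marginal agent's success probability is $1-\theta^*_{ct}$ (it is failure that has probability $\theta^*_{ct}$), and this probability must equal $1/(1+b_{ct})$, so your two misstatements cancel and the main argument is unaffected.
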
 
\begin{proof}[Proof:]
Rewrite condition (\ref{crittheta}) using $\Phi[-z]=1-\Phi[z]$, let $\sigma_{ct}\rightarrow 0$, and solve for $\theta_{ct}^*$. 
\end{proof}

Proposition \ref{simplecoup} describes the empirical content of the basic ideas in \citet{morris2003}. As parametric uncertainty vanishes, the model leaves strategic uncertainty resolution intact, which results in all agents coordinating on attacking the regime (conditional on $b_{ct}>0$) when $\theta_{ct}\leq\theta_{ct}^*$, and no agents attacking otherwise. While straightforward, this version of the model poses a possible conceptual difficulty: coup participation is either full or zero.\footnote{On the other hand, it squares quite well with the logical result that if a coup is known to be unsuccessful and there is no uncertainty about regime strength, no one would participate in it.} Another way to view Proposition \ref{simplecoup} is as a statement as to what the model can offer information on if nothing is known about the strength of an attack. If attack strength $A_{ct}$ is unobserved, it is natural to suppose that participation is all-or-nothing, and there is no basis for estimating the parameter $\sigma_{ct}$.

The probability that a coup is observed and it is successful depends upon the exact value of $b_{ct}$, as this determines $\theta_{ct}^*(b_{ct})$, which in turn determines the probability of a successful coup (see Proposition \ref{simplecoup}). That is, conditional on $b_{ct} \geq 0$,  we have:

\begin{equation} \label{nocoupc}
    P(\textrm{successful coup}|\textrm{coup attempt}) = P(\theta_{ct}\leq\theta_{ct}^*(b_{ct})|b_{ct}>0)
\end{equation}

Where:

\begin{equation} \label{nocoupd}
    P(\theta_{ct}\leq\theta_{ct}^*(b_{ct})|b_{ct}>0) = \int_{0}^\infty\Phi\left[\theta_{ct}^*(b')-z_{ct}\right]\phi(b', X_{ct}^b\beta, 1;0)db'
\end{equation}

In expression (\ref{nocoupd}), $\phi(b, X_{ct}^b\beta, 1;0)$ denotes the normal density with mean $ X_{ct}^b\beta$ and standard deviation 1, left truncated at zero. In (\ref{nocoupd}), we have used the assumption that $\tau_{ct}$ is assumed to be $N(z_{ct},1)$ (Assumption \ref{meanz}). 

The right-hand side of (\ref{nocoupd}) can be simulated using $K$ draws from the standard normal distribution with mean $ X_{ct}^b\beta$, variance of one, left-truncated at zero, while also applying our expression for the prior mean $z_{ct}=X_{ct}^z\zeta$ as follows:\footnote{As a practical matter, the simulated integrals (\ref{nocoupc}) and (\ref{nocoupd}) can use fixed draws from a uniform distribution and the desired value of $b$ can be computed using $b_{k} = \Phi^{-1}( \Phi(- X^b_{ct}\beta) + \Phi(X_{ct}^b)u_{kct})+ X^b_{ct}\beta.$} 

\begin{equation} \label{simplesuccoup}
    P(\theta_{ct}\leq\theta_{ct}^*(b_{ct})|b_{ct}>0) \approx K^{-1}\sum_{k=1}^K\Phi\left[\theta_{ct}^*(b_k)- X_{ct}^z\zeta\right]
\end{equation}

A similar approach yields the simulated probability of observing an unsuccessful coup, conditional on a coup attempt:

\begin{equation} \label{simplesuccoup2}
    P(\textrm{failed coup}|b_{ct}) \approx K^{-1}\sum_{k=1}^K\Phi\left[  X_{ct}^z\zeta-\theta_{ct}^*(b_k)\right]
\end{equation}

We can then write the log-likelihood contribution of an observation as:

\begin{align}
LL_{ct}
&= (1-Y_{ct})\ln \Phi\left[- X_{ct}^b\beta\right]
+ Y_{ct}\ln \Phi\left[ X_{ct}^b\beta\right] \nonumber \\
&\quad + Y_{ct}\Bigg(
S_{ct}\ln\left[
K^{-1}\sum_{k=1}^K
\Phi\left[\theta_{ct}^*(b_k)- X_{ct}^z\zeta\right]
\right] \nonumber \\
&\qquad\qquad
+ (1-S_{ct})\ln\left[
K^{-1}\sum_{k=1}^K
\Phi\left[ X_{ct}^z\zeta-\theta_{ct}^*(b_k)\right]
\right]
\Bigg).
\label{simplell}
\end{align}

and the model can be estimated by choosing $\beta$ and $\zeta$ to maximize 
\begin{equation*}
    LL = \sum_{c=1}^C\sum_{t=1}^T LL_{ct}.  
\end{equation*}

\subsection{Full model} \label{fullmodel}

To estimate a version of the model which includes $\sigma_{ct}$, a data concern and a computational concern must be addressed. The identification strategy for $\sigma_{ct}$ relies on observing information about the scale of coup participation. Ideally, one would observe the fraction of eligible participants actively involved in an attempted overthrow, permitting direct measurement of attack size through equation (19). Such information is rarely available in cross-national coup datasets. Instead, continuing a theme of the paper in using information already available in most coup data sets, we exploit qualitative information on coup strength already recorded in the Powell-Thyne data \citep{PowellThyne2011}. The dataset distinguishes coups that are broadly supported and involve substantial participation from those that are narrowly organized conspiratorial actions. We map these classifications into the theoretical categories of the model by treating coups identified as popular or broadly supported as strong attacks and coups identified as conspiratorial as weak attacks. Intermediate cases are classified as neither strong nor weak.

This mapping should not be interpreted literally as a direct measure of participation rates. Rather, it provides an ordinal measure of attack size sufficient to distinguish broad-based mobilization from narrowly organized attempts. To operationalize the model, we introduce thresholds $\eta_w$ and $\eta_s$ and interpret weak attacks as those involving participation rates below $\eta_w$ and strong attacks as those involving participation rates above $\eta_s$. In the empirical implementation, we set $\eta_w=0.1$ and $\eta_s=0.9$, so that weak coups correspond to attempts involving only a small fraction of potential participants while strong coups correspond to attempts attracting near-universal participation within the relevant coalition. 

Conditional on $\theta_{ct}$ and an equilibrium cutoff signal $x_{ct}^*$, attack size is defined as:

\begin{equation} \label{attacksize1}
A_{ct} = \Phi\left[\frac{x_{ct}^*-\theta_{ct}}{\sigma_{ct}}\right]
\end{equation}

Equation (\ref{invert}) describes the equilibrium relationship between $x_{ct}^*$ and $\theta_{ct}^*$: 
\begin{equation} \label{quantile_fn2}
x_{ct}^*=\theta_{ct}^* + \sigma_{ct} q (\theta_{ct}^*)
\end{equation}

Combining (\ref{attacksize1}) and (\ref{quantile_fn2}) gives a reduced-form relationship between $\theta_{ct}$, $\theta^*_{ct}$ and attack size:

\begin{equation} \label{attacksize_rf}
A_{ct} = \Phi\left[\frac{\theta_{ct}^* + \sigma_{ct} q (\theta_{ct}^*)-\theta_{ct}}{\sigma_{ct}}\right]
\end{equation}

Expression (\ref{attacksize_rf}) recasts statements about attack size in terms of $\theta_{ct}$. An attack is classified as strong if $A_{ct}\geq\eta_s$ and weak if $A_{ct}\leq\eta_w$. Using (\ref{attacksize_rf}), a strong attack is observed if:
\begin{equation}
\Phi\left[\frac{\theta_{ct}^* + \sigma_{ct} q (\theta_{ct}^*)-\theta_{ct}}{\sigma_{ct}}\right]  \geq \eta_s\quad\rightarrow\quad \theta_{ct} \leq \theta_{ct}^* + \sigma_{ct} \left(q(\theta^*_{ct})-q(\eta_s)\right)
\end{equation}
Similarly, we can develop the criteria for a weak attack being observed:
\begin{equation}
\Phi\left[\frac{\theta_{ct}^* + \sigma_{ct} q (\theta^*_{ct})-\theta_{ct}}{\sigma_{ct}}\right]  \leq \eta_w\quad\rightarrow\quad \theta_{ct} \geq \theta_{ct}^* + \sigma_{ct} \left(q(\theta_{ct}^*)-q(\eta_w)\right)
\end{equation}

Depending on the relationship between $\theta_{ct}^*$, $\eta_w$ and $\eta_s$, we can observe different combinations of attack strength and attack success or failure. A visual representation of these relationships is provided in Figure \ref{parmranges}, which illustrates how information on coup size helps identify $\sigma_{ct}$. The horizontal axis measures the critical threshold $\theta_{ct}^*$ required to overthrow the regime, while the vertical axis measures the realized value of $\theta_{ct}$. Below the 45-degree line, the regime is overthrown because $\theta_{ct}<\theta_{ct}^*$; above the line, the coup fails.

The introduction of weak and strong coup categories partitions this space further. For example, to the left of the locus defined by $\theta_{ct}^*+\sigma_{ct}\left[q(\theta_{ct}^*)-q(\eta_w)\right]$, participation falls below the weak-coup threshold, so that $A_{ct}<\eta_w$. This creates regions in which both successful and failed coups may be classified as weak. Analogous regions arise for strong coups when participation exceeds the threshold $\eta_s$. In fact, one might view the six regions of the figure determined by the loci on the figure as bins collecting the number of successful and failed coups of each type. By determining the slope of the loci on the figure, $\sigma_{ct}$ determines the relative size of each of these bins. 

The figure also illustrates the natural tendency for weak coups to fail and strong coups to succeed. Identification of $\sigma_{ct}$ comes from deviations from these patterns. In particular, observing weak coups that nevertheless succeed, or strong coups that nevertheless fail, provides information about the dispersion of private information. Such outcomes reflect coordination errors in the sense that participation is unexpectedly low in coups that ultimately succeed, or unexpectedly high in coups that ultimately fail.

\begin{figure}[htb!] 
\centering
\includegraphics[width=\textwidth]{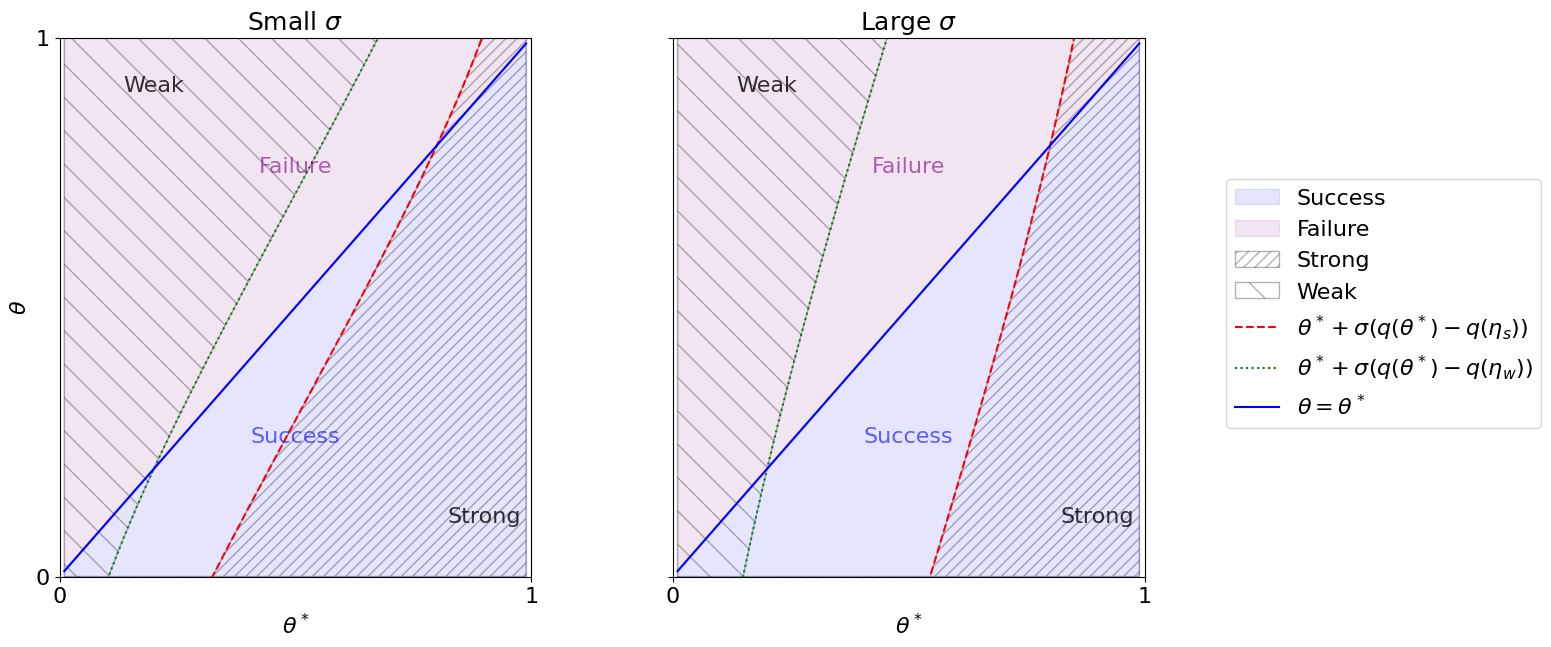}
\caption{Parameters showing possible configurations of $\theta^*$, and the participation parameters $\eta_w$ and $\eta_s$. Depending on where the actual value of $\theta$ falls and where in the range $\theta^*$ is, strong or weak coups that fail or succeed, or coups that are neither strong nor weak. Larger values of $\sigma$ alter the frequency with which different strengths are observed. Note that as $\sigma$ shrinks to zero, the dashed lines collapse to the blue line, distinguishing success from failure.}
\label{parmranges}
\end{figure}

The figure allows a direct means of calculating the likelihood of observing both the strength of the coup and its success or failure. For example, a strong, successful coup requires the following two criteria to hold: 

\begin{eqnarray} \label{strongsuc}
\theta_{ct} &\leq & \theta_{ct}^*, \nonumber \quad \textrm{and}\\
\theta_{ct} &\leq & \theta_{ct}^*+\sigma_{ct}[q(\theta_{ct}^*)-q(\eta_s)]
\end{eqnarray}

This requires that: 

\begin{equation} \label{strongsuc2}
\theta_{ct} \leq \theta_{ct}^* + \sigma_{ct}\min \left[0, q(\theta^*_{ct})-q(\eta_s)\right]
\end{equation}

An intuitive explanation of equation (\ref{strongsuc2}) is that when $\theta^*_{ct}$ is small, on the left-hand sides of the panels in Figure \ref{parmranges}, coups occur and are successful, yet aren't necessarily large. On the right side of each figure, however, $\theta_{ct}^*$ is close to one, so almost every successful coup will have to be a large one in this parameter range. 

Guided by Figure \ref{parmranges}, we can write the full likelihood for combinations of success, failure, and coup size, so that the six separate regions of Figure \ref{parmranges} are exhausted. To do so, we also introduce two dichotomous variables, $S_{ct} = 1$ if a coup is strong (and zero otherwise), and $W_{ct}=1$ if a coup is weak (zero otherwise). It also helps to introduce the simplifying notation:
\begin{eqnarray*}
q_{ct,s}^{+} = \max \left[0, q(\theta^*_{ct})-q(\eta_s)\right], & &q_{ct,s}^{-} = \min \left[0, q(\theta^*_{ct})-q(\eta_s)\right] \nonumber \\  
q_{ct,w}^{+} = \max \left[0, q(\theta^*_{ct})-q(\eta_w)\right], & &q_{ct,w}^{-} = \min \left[0, q(\theta^*_{ct})-q(\eta_w)\right]
\end{eqnarray*}
We then have a list of probabilities that we observe various combinations of regime strength and coup success, as follows:
\begin{align} \label{full_probs}
p_{ct}^{SS}
&\equiv
P(\textrm{strong}\cap\textrm{success}\mid b>0)
=
\Phi\left[
\theta_{ct}^*
+\sigma_{ct}q_{ct,s}^-
- X_{ct}^z\zeta
\right], \nonumber \\[0.5em]
p_{ct}^{SF}
&\equiv
P(\textrm{strong}\cap\textrm{failure}\mid b>0)
=
\Phi\left[
\theta_{ct}^*
+\sigma_{ct}q_{ct,s}^+
-X_{ct}^z\zeta
\right]
-
\Phi\left[
\theta_{ct}^*
-X_{ct}^z\zeta
\right], \nonumber \\[0.5em]
p_{ct}^{WS}
&\equiv
P(\textrm{weak}\cap\textrm{success}\mid b>0)
=
\Phi\left[
\theta_{ct}^*
-X_{ct}^z\zeta
\right]
-
\Phi\left[
\theta_{ct}^*
+\sigma_{ct}q_{ct,w}^-
-X_{ct}^z\zeta
\right], \nonumber \\[0.5em]
p_{ct}^{WF}
&\equiv
P(\textrm{weak}\cap\textrm{failure}\mid b>0)
=
\Phi\left[
X_{ct}^z\zeta
-
\left(
\theta_{ct}^*
+\sigma_{ct}q_{ct,w}^+
\right)
\right], \nonumber \\[0.5em]
p_{ct}^{NS}
&\equiv
P(\textrm{neither}\cap\textrm{success}\mid b>0)
=
\Phi\left[
\theta_{ct}^*
+\sigma_{ct}q_{ct,w}^-
-X_{ct}^z\zeta
\right]
-
\Phi\left[
\theta_{ct}^*
+\sigma_{ct}q_{ct,s}^-
-X_{ct}^z\zeta
\right], \nonumber \\[0.5em]
p_{ct}^{NF}
&\equiv
P(\textrm{neither}\cap\textrm{failure}\mid b>0)
=
\Phi\left[
\theta_{ct}^*
+\sigma_{ct}q_{ct,w}^+
-X_{ct}^z\zeta
\right]
-
\Phi\left[
\theta_{ct}^*
+\sigma_{ct}q_{ct,s}^+
-X_{ct}^z\zeta
\right].
\end{align}
Let
\begin{equation*}
\mathcal J=\{SS,SF,WS,WF,NS,NF\},
\end{equation*}
where $SS$ denotes strong-success, $SF$ strong-failure, $WS$ weak-success, $NS$ denotes a neither strong-nor-weak coup that was nonetheless successful, and so on. Let $J_{ct}\in\mathcal J$ denote the observed coup category, conditional on $Y_{ct}=1$. Because the equilibrium threshold $\theta^*_{ct}$ depends on the realized value of $b_{ct}$, the probabilities in (\ref{full_probs}) must be integrated over the distribution of $b_{ct}$ conditional on a coup attempt. We approximate this integral using the same truncated-normal simulation draws used in the simple model:
\[
\bar p_{ct}^{j}
=
K^{-1}\sum_{k=1}^{K}
p_{ct}^{j}(b_k, X_{ct}^{z}\zeta,\sigma_{ct}),
\qquad
b_k\sim b_{ct}\mid b_{ct}>0,X_{ct}^{b}.
\]

The likelihood contribution for country $c$ in period $t$ is then
\begin{equation} \label{full_ll}
LL_{ct}
=
(1-Y_{ct})\ln \Phi\!\left(- X_{ct}^{b}\beta\right)
+
Y_{ct}\ln \Phi\!\left( X_{ct}^{b}\beta\right)
+
Y_{ct}
\sum_{j\in\mathcal J}
\mathbf{1}\{J_{ct}=j\}\ln \bar p_{ct}^{j}.
\end{equation}

and the model can be estimated by choosing $\beta$ and $\zeta$ to maximize 
\begin{equation*}
    LL = \sum_{c=1}^C\sum_{t=1}^T LL_{ct}.  
\end{equation*}

Estimation also requires repeated computation of $\theta^*_{ct}$. Equation (\ref{crittheta}) can be rewritten in a form leading to a contraction mapping that exploits the smoothness, slope and boundedness properties of the cumulative normal distribution:
\begin{equation} \label{cmap}
\theta_{ct,m+1}^* = \Phi\left[\frac{\sigma_{ct}}{\tau_{ct}^2}\left( \theta_{ct,m}^*-z_{ct}\right)-q\left(\frac{1}{1+b_{ct}}\right)\frac{\sqrt{\sigma_{ct}^2+\tau_{ct}^2}}{\tau_{ct}}\right]
\end{equation}

Under Assumption \ref{sigass} iteration of (\ref{cmap}) converges to $\theta_{ct}^*$ via the contraction mapping theorem.

\section{Estimation} \label{sec:est}

Our data are drawn from the Cline Center Coup d'\'{E}tat Project dataset maintained by \citet{cline}.\footnote{The data are publicly available at \url{https://doi.org/10.13012/B2IDB-9651987_V6}.} The dataset covers approximately 130 countries from 1950 through 2021, yielding roughly 11,000 country-year observations. According to the project codebook, a coup d'\'{e}tat is defined as an ``organized effort to effect sudden and irregular removal of the incumbent executive authority of a national government, or to displace authority at the highest levels of one or more branches of government'' \citep[p.~3]{peyton2026cline}.

The data record the country and year in which coup attempts occur, as well as whether those attempts are successful. Accordingly, the data may be organized as a country-year panel in which each observation indicates whether a coup attempt occurred and, conditional on an attempt, whether it succeeded. Figure \ref{fig} summarizes the evolution of coup attempts and successes over time, while Figure \ref{fig2} summarizes the geographic distribution of coups over time.\footnote{These graphs may differ slightly from those produced by the online graphing and mapping tool provided by \citet{ClineCenter} at \url{https://clinecenter.illinois.edu}. The difference arises because our figures aggregate coup-years rather than individual coup events, consistent with the structure of our panel dataset.}

\begin{figure}[!htbp]
\centering
\includegraphics[width=.9\textwidth]{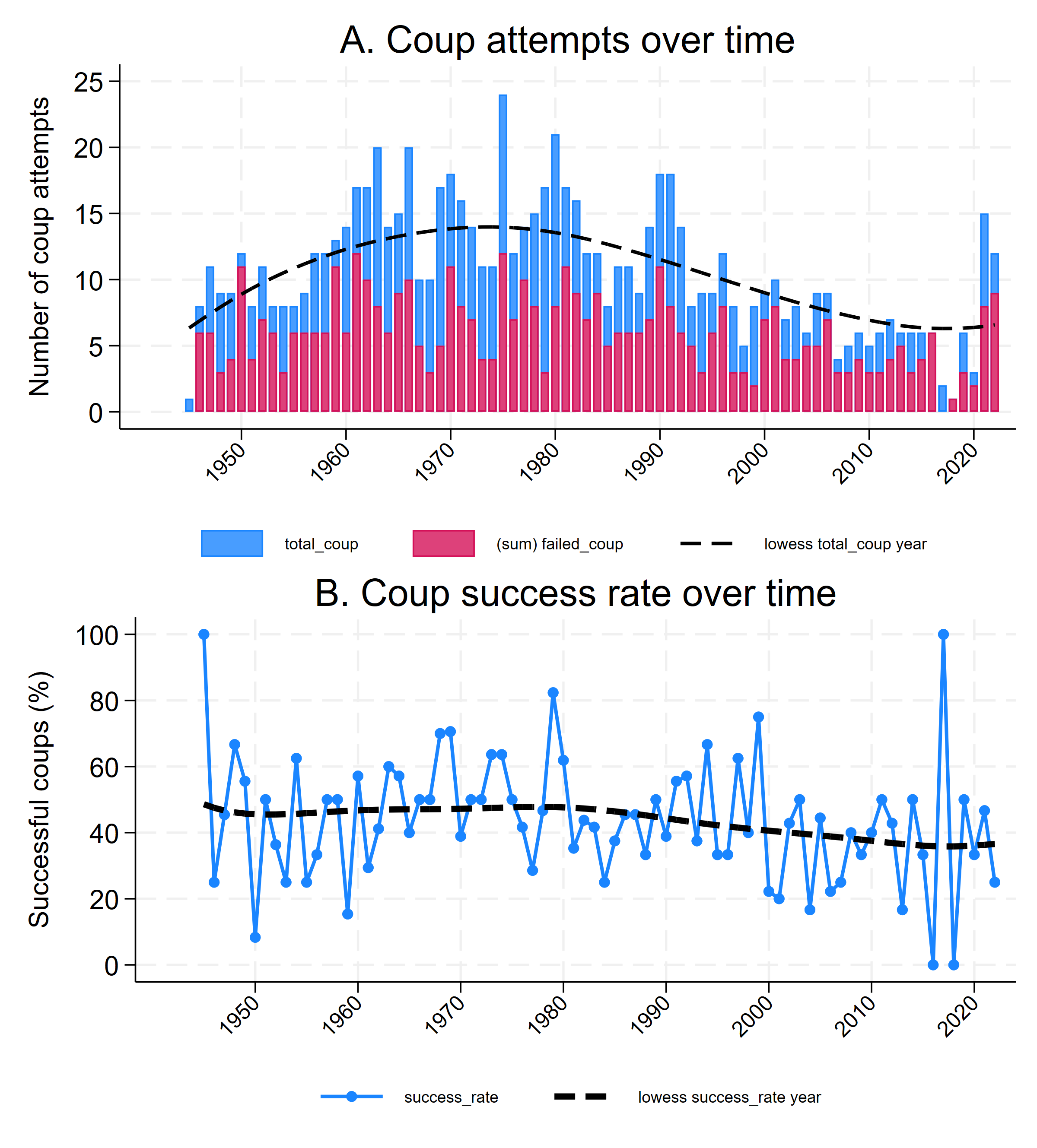}
\caption{Coup attempts and success over time}
\label{fig}
\end{figure}

On average, there were 10.76 coup-years, corresponding to coup activity in roughly 8\% of countries in a given year.\footnote{The Cline center database in fact records whether there were multiple coups in a given year; we have collapsed this information to a simple binary occurrence and success variables for each country-year pair.} Figure \ref{fig} contains a quadratic trend line, which suggests that the frequency of coups has remained relatively stable over the sample period with a modest downward trend in recent decades. The lower panel of the figure suggests that the success rate of coups appears to have remained comparatively stable over time. Figure \ref{fig2} displays the geographic distribution of coups, where darker colors indicate more coup activity. The figure suggests that coup activity is more prevalent in less-developed regions of the world. While purely descriptive, this pattern is consistent with the broader empirical literature linking coups to weaker macroeconomic conditions. 

Indeed, the empirical literature on coups d'\'{e}tat is vast and includes hundreds of proposed determinants of coup incidence and success. Rather than attempting to encompass the full universe of variables considered in this literature, our objective is to construct parsimonious specifications that capture the broad classes of factors most consistently identified as important. In this sense, our approach is closer in spirit to the synthesis provided by studies such as \citet{gassebner2016expect} and \citet{powell2012} than to an exhaustive search over possible correlates of coup activity.

\begin{figure}[!htbp]
\centering
\includegraphics[width=.9\textwidth]{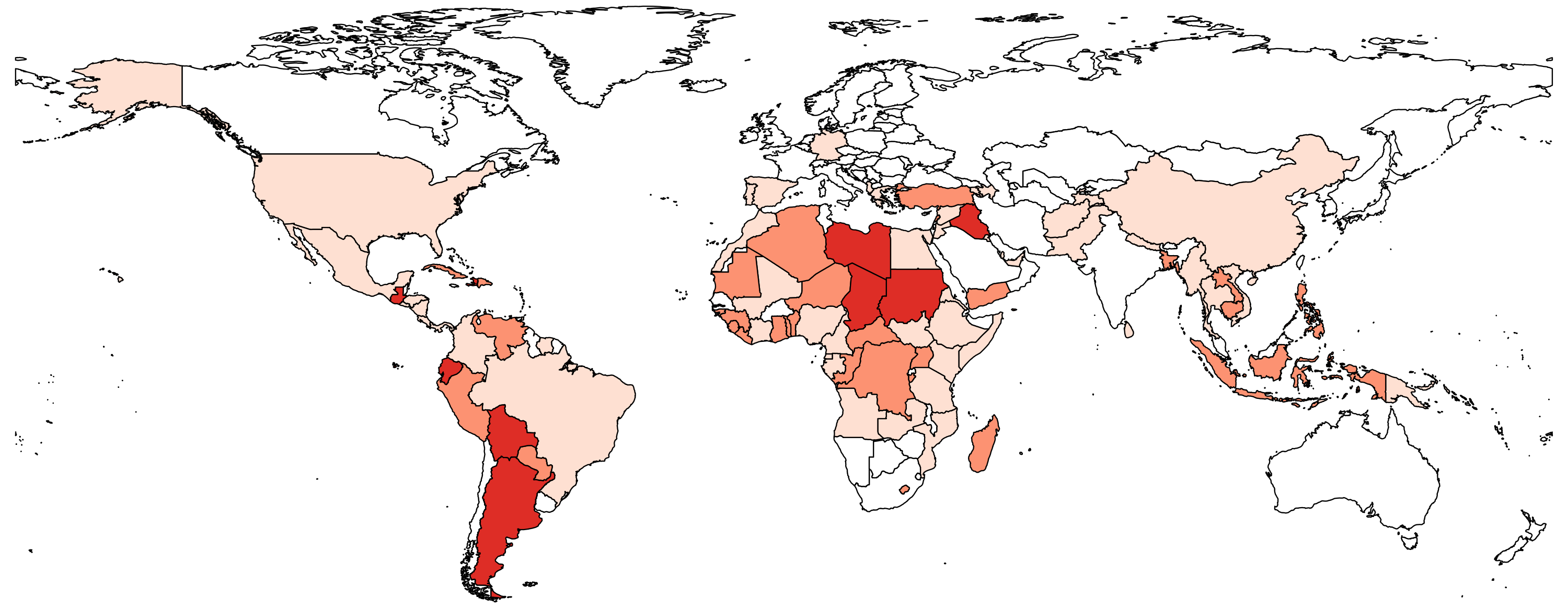}
\caption{Coup attempts and success over time}
\label{fig2}
\end{figure}

We therefore supplement the coup data with a set of country-level covariates intended to represent the principal categories of explanatory variables emphasized in the literature. Economic performance is captured using the growth rate of real GDP per capita and the log of real GDP per capita, both obtained from the Penn World Tables \citep{feenstra2015}. Institutional persistence is measured using regime durability, defined as the number of years since the most recent regime transition and drawn from the Polity5 dataset \citep{marshall2020}. To capture the ability of regimes to deter or withstand collective action, we include a measure of coup-proofing based on institutional arrangements designed to reduce coup risk, following \citet{quinlivan99}. Finally, we incorporate an index of information reliability, adapted from the informational framework developed in \citet{edmond13}, intended to capture the precision of information available to agents engaged in strategic coordination. Summary statistics for the variables employed in the analysis are reported in Table \ref{tab:summarystats}.

\begin{table}[!htbp]
\centering
\scriptsize
\caption{Summary statistics by coup d'\'etat occurrence and outcome}\vspace{.3cm}
\label{tab:summarystats}
{
\def\sym#1{\ifmmode^{#1}\else\(^{#1}\)\fi}
\begin{tabular}{l*{4}{ccc}}
\toprule
                    &\multicolumn{3}{c}{\shortstack{Full sample\\ (N=10764)}}&\multicolumn{2}{c}{\shortstack{No coup\\ (N=9925)}}&\multicolumn{2}{c}{\shortstack{Failed coup\\ (N=463)}}&\multicolumn{2}{c}{\shortstack{Successful coup\\ (N=376)}}\\
                    &        Obs.&        Mean&          SD&        Mean&          SD&        Mean&          SD&        Mean&          SD\\
\midrule
Real GDP growth     &        7238&       0.017&     (0.067)&       0.019&     (0.063)&       0.002&     (0.087)&      -0.021&     (0.100)\\
Log GDP             &        7364&       8.539&     (1.178)&       8.589&     (1.185)&       8.040&     (0.963)&       8.028&     (0.964)\\
Population growth   &        7238&       0.020&     (0.016)&       0.020&     (0.016)&       0.024&     (0.014)&       0.020&     (0.024)\\
Durability          &        7838&      16.066&    (19.784)&      16.869&    (19.875)&      11.471&    (19.413)&       5.653&    (13.855)\\
Polity score        &        7874&      -0.776&     (7.002)&      -0.516&     (7.101)&      -2.390&     (6.018)&      -4.023&     (4.668)\\
Coup proofing       &        5987&       1.749&     (0.655)&       1.754&     (0.655)&       1.673&     (0.656)&       1.719&     (0.654)\\
Information reliability&        9689&       0.127&     (0.847)&       0.117&     (0.855)&       0.209&     (0.808)&       0.269&     (0.666)\\
\bottomrule
\end{tabular}
}

\end{table}

Several patterns emerge from the summary statistics reported in Table \ref{tab:summarystats}. Unsurprisingly, country-year observations experiencing coup attempts differ systematically from observations in which no coup occurs, which may account in large part for differences in the rate at which coups occur across regions and time. Coup attempts are associated with lower Polity scores on average, indicating that they are more common in less democratic political environments. They are also associated with higher values of the information reliability index, consistent with the possibility that strategic coordination becomes easier when agents possess more precise information. Among country-year observations experiencing coup attempts, successful coups tend to be associated with lower rates of economic growth and less favorable measures of governmental performance than failed coups. Successful coups also occur, on average, in less durable regimes, suggesting that younger political institutions may be more vulnerable to successful challenges.

Table \ref{tab:summarystats} also highlights an important and familiar practical consideration for the empirical analysis: The inclusion of additional covariates reduces the number of available observations because of missing data. While this issue is common in cross-country panel studies, it is particularly consequential here because coup attempts are relatively rare events. This consideration further motivates the use of parsimonious specifications.

\subsection{Simple Models}

Table \ref{tab:table1} reports estimates of several restricted specifications corresponding to the limiting case in which $\sigma_{ct}\rightarrow 0$. In this setting, regime strength is effectively observed by all agents, eliminating informational uncertainty and reducing the model to one in which strategic uncertainty alone determines participation decisions. Table \ref{tab:table1} leads off with an intercept-only specification, which provides a calibration benchmark for the latent-index model. In the benefit equation, the estimated intercept satisfies $\Phi(\hat\beta_0)\approx 0.068$, so that the implied probability of a coup attempt matches the unconditional frequency observed in the data. Likewise, the intercepts imply an unconditional probability of coup success, $\int_0^\infty \Phi\left(\tfrac{b}{1+b}-\hat z_0\right)f(b\mid b>0)\,db \approx 0.446$, matching the frequency in the data.

\begin{table}[!htbp]
\centering
\footnotesize
\caption{Estimated models of mean benefits and regime strength}
\vspace{.3cm}
\label{tab:table1}
{
\def\sym#1{\ifmmode^{#1}\else\(^{#1}\)\fi}
\begin{tabular}{l*{5}{c}}
\toprule
            &\multicolumn{1}{c}{(1)}&\multicolumn{1}{c}{(2)}&\multicolumn{1}{c}{(3)}&\multicolumn{1}{c}{(4)}&\multicolumn{1}{c}{(5)}\\
            &\multicolumn{1}{c}{Constants}&\multicolumn{1}{c}{Time Trend}&\multicolumn{1}{c}{+ Proofing Excl.}&\multicolumn{1}{c}{+ Protest Excl.}&\multicolumn{1}{c}{Parsimonious}\\
\midrule
\textbf{Coup benefits $b_{ct}$}&                     &                     &                     &                     &                     \\
Constant    &      -1.419\sym{***}&       7.017\sym{***}&       27.62\sym{***}&       21.40\sym{**} &       27.29\sym{***}\\
            &    (0.0177)         &     (1.598)         &     (4.621)         &     (9.667)         &     (4.607)         \\
Year        &                     &    -0.00426\sym{***}&     -0.0137\sym{***}&     -0.0105\sym{**} &     -0.0136\sym{***}\\
            &                     &  (0.000807)         &   (0.00231)         &   (0.00484)         &   (0.00231)         \\
Real GDP per capita growth&                     &                     &      -1.420\sym{***}&      -1.120\sym{**} &      -1.427\sym{***}\\
            &                     &                     &     (0.355)         &     (0.498)         &     (0.356)         \\
Log GDP per capita&                     &                     &      -0.176\sym{***}&      -0.240\sym{***}&      -0.175\sym{***}\\
            &                     &                     &    (0.0257)         &    (0.0382)         &    (0.0254)         \\
Regime durability&                     &                     &    -0.00652\sym{***}&    -0.00737\sym{**} &    -0.00646\sym{***}\\
            &                     &                     &   (0.00185)         &   (0.00302)         &   (0.00185)         \\
Polity Index&                     &                     &     -0.0130\sym{***}&     -0.0216\sym{***}&     -0.0119\sym{**} \\
            &                     &                     &   (0.00488)         &   (0.00693)         &   (0.00472)         \\
Population growth rate&                     &                     &      -1.951         &       0.101         &                     \\
            &                     &                     &     (2.080)         &     (2.854)         &                     \\
Log protests&                     &                     &                     &       0.176\sym{***}&                     \\
            &                     &                     &                     &    (0.0555)         &                     \\
\midrule
\textbf{Regime strength $z_{ct}$}&                     &                     &                     &                     &                     \\
Constant    &       0.400\sym{***}&      -4.787         &      -32.57\sym{***}&      -30.17         &      -35.08\sym{***}\\
            &    (0.0440)         &     (4.424)         &     (12.19)         &     (25.94)         &     (11.12)         \\
Year        &                     &     0.00262         &      0.0166\sym{***}&      0.0157         &      0.0178\sym{***}\\
            &                     &   (0.00224)         &   (0.00608)         &    (0.0130)         &   (0.00560)         \\
Real GDP per capita growth&                     &                     &       0.101         &       1.231         &                     \\
            &                     &                     &     (0.931)         &     (1.430)         &                     \\
Log GDP per capita&                     &                     &    -0.00708         &     -0.0303         &                     \\
            &                     &                     &    (0.0679)         &     (0.102)         &                     \\
Regime durability&                     &                     &     0.00272         &    0.000415         &                     \\
            &                     &                     &   (0.00401)         &   (0.00854)         &                     \\
Polity Index&                     &                     &     0.00710         &    0.000150         &                     \\
            &                     &                     &    (0.0131)         &    (0.0198)         &                     \\
Population growth rate&                     &                     &       18.94\sym{***}&       15.32\sym{*}  &       18.41\sym{***}\\
            &                     &                     &     (6.030)         &     (8.180)         &     (5.871)         \\
Coup-proofing measures&                     &                     &      -0.183\sym{*}  &      -0.434\sym{***}&      -0.178         \\
            &                     &                     &     (0.111)         &     (0.155)         &     (0.110)         \\
\midrule
\(N\)       &       10764         &       10764         &        5076         &        3243         &        5076         \\
\bottomrule
\multicolumn{6}{l}{\footnotesize Notes: Estimation follows equation (\ref{simplell}) with $K=100$.}\\
\multicolumn{6}{l}{\footnotesize All explanatory variables are lagged one year.}\\
\multicolumn{6}{l}{\footnotesize *, **, and *** denote significance at the 10\%, 5\%, and 1\% levels.}\\
\end{tabular}
}

\end{table}

To identify the regime-strength equation, we employ an exclusion restriction based on coup-proofing measures. Specifically, coup-proofing is excluded from the coup-benefit equation, $b_{ct}$, and included only in the regime-strength equation, $z_{ct}$. This restriction is motivated by the view that coup-proofing arrangements affect the ability of a regime to withstand a coup attempt without directly altering the benefits that participants expect from a successful coup. All explanatory variables are lagged one year and therefore predate the realization of coup outcomes in period $t$. This timing convention reflects the possibility that economic conditions, political institutions, and coup-proofing measures may themselves respond to political instability (e.g., \citet{barro1991}, \citet{alesina1996}). The objective of the paper is not to model these broader dynamic interactions, but rather to examine how previously observed country characteristics map into the desirability and feasibility of regime overthrow within the global-games framework. Accordingly, the covariates should be interpreted as predetermined state variables rather than as causally identified determinants of coup activity.

Several features of the estimates are noteworthy. First, the explanatory variables exhibit much stronger and more precisely estimated effects in the coup-benefit equation, $b_{ct}$, than in the regime-strength equation, $z_{ct}$. This difference is likely attributable to the fact that the benefit equation is informed by all country-year observations, whereas the regime-strength equation is identified only from observations in which a coup attempt occurs.

The estimated effects in the benefit equation are generally intuitive. Higher levels of income and faster income growth are associated with lower values of $b_{ct}$, while protest activity is associated with higher values. More durable regimes and more democratic political systems also appear to reduce the attractiveness of coup activity. Despite the inclusion of these controls, the time trend remains statistically significant, suggesting that the long-run decline in coup attempts is only partially explained by the observable covariates included in the model.

The determinants of regime strength, $z_{ct}$, are estimated less precisely. The coup-proofing measure enters consistently and significantly, but its estimated effect is negative, implying that coup-proofing is more prevalent in observations associated with lower values of $z_{ct}$. One possible interpretation is that coup-proofing is a symptom: governments facing a greater risk of removal may adopt coup-proofing institutions in an effort to compensate for an unfavorable strategic position.

An important feature of the estimation results is their stability. We consider a variety of expanded specifications in Appendix \ref{simprob}, including regional and decadal fixed effects, and specifications in which exclusion restrictions for nonparametric identification are dropped. Across all alternative specifications considered, the principal coefficients change little in either sign or magnitude, despite substantial changes in the sets of identifying restrictions and nuisance controls. This stability suggests that the estimated decomposition reflects persistent features of the data rather than particular modeling choices. The results also suggest that broad macroeconomic variables account for much of the temporal and regional variation in coup incidence, as decadal and regional fixed effects add little explanatory power to the model.

\subsection{Full models}

We now turn to estimation of the full model, in which the information parameter is allowed to vary and $\sigma_{ct}>0$. As discussed above, identification of $\sigma_{ct}$ requires information on the extent of participation in coup attempts. Direct measures of participation are typically unavailable, but we can exploit information on the nature of coup attempts recorded in our data. Specifically, the dataset classifies coups according to a number of characteristics, including whether an attempt was conspiratorial in nature and whether it involved substantial popular participation. We interpret conspiratorial coups as weak attacks and popular coups as strong attacks, thereby obtaining a coarse measure of coup size. In context, this means that if an attack was noted to be conspiratorial, we posit that $A_{ct}<\eta_w=.1$. If an attack, however, was popular, we posit that $A_{ct}>\eta_s=.9$. In terms of the data, applying this classification yields the joint distribution of coup outcomes and attack strengths reported in Table \ref{tab:coup_strength}.

\begin{table}[!htbp]
\centering
\footnotesize
\caption{Attack strength by coup outcome}\vspace{.3cm}
\label{tab:coup_strength}
\begin{tabular}{lccc}
\toprule
Attack strength & Failed coup & Successful coup & Total \\
\midrule
Weak &       198 (100.0\%) &         0 ( 0.0\%) &       198 (100.0\%) \\
Neither &       254 (45.4\%) &       305 (54.6\%) &       559 (100.0\%) \\
Strong &        11 (13.4\%) &        71 (86.6\%) &        82 (100.0\%) \\
\midrule
Total &       463 &       376 &       839 \\
\bottomrule
\end{tabular}

\end{table}

Table \ref{tab:coup_strength} provides the empirical analogue of Figure \ref{parmranges}. The resulting classification reveals a clear, though imperfect, relationship between participation and success. Among failed coups, 198 are classified as weak and only 11 as strong. Conversely, none of the successful coups are classified as weak, while 71 are classified as strong. A substantial number of coups fall into neither category, indicating that participation is strongly associated with success but is not determinative.

Table \ref{tab:table2} reports estimates of the full model, in which the information parameter $\sigma_{ct}$ is estimated jointly with the coup-benefit and regime-strength equations. Table \ref{tab:table2} again provides a useful calibration benchmark. The estimated benefit intercept implies an unconditional coup-attempt probability of $\Phi(-1.402)=0.080$, which again matches the observed frequency of coup attempts in the data. The estimated log signal standard deviation is $0.0189$, implying $\hat{\sigma}=\exp(0.0189)=1.019$, which lies well below the uniqueness threshold $\sqrt{2\pi}\approx2.51$ required by Assumption~\ref{sigass}. Thus, even in the absence of observable covariates, the model implies a stable equilibrium information structure consistent with the theoretical framework.

The rest of the results reported on Table \ref{tab:table2} largely reaffirm the findings from the restricted models reported in Table \ref{tab:table1}. Economic performance, regime durability, and political institutions continue to exert their primary influence through the coup-benefit equation, $b_{ct}$, while coup-proofing measures remain important predictors of regime strength, $z_{ct}$. The principal innovation of the full model is the estimation of the information equation governing $\sigma_{ct}$.

An important diagnostic is whether the estimated values of $\sigma_{ct}$ satisfy the uniqueness condition asserted in Assumption \ref{unique} when covariates are included. The estimated log signal standard deviations are typically around $0.56$, implying signal standard deviations of approximately $\exp(0.56)=1.75$. This value is comfortably below the threshold $\sqrt{2\pi}=2.51$, ensuring that the maintained condition for uniqueness is satisfied throughout the sample. The final column of Table \ref{tab:table2} presents a parsimonious specification obtained by eliminating variables that contribute little explanatory power. While such simplification should always be approached with caution, the resulting specification retains the key theoretical determinants of coup benefits, regime strength, and information quality while providing a tractable framework for simulation and counterfactual analysis.

\begin{table}[!htbp]
\centering
\footnotesize
\caption{Simulated maximum likelihood parameter estimates of the full model}
\label{tab:table2}
{
\def\sym#1{\ifmmode^{#1}\else\(^{#1}\)\fi}
\begin{tabular}{l*{5}{c}}
\toprule
            &\multicolumn{1}{c}{(1)}&\multicolumn{1}{c}{(2)}&\multicolumn{1}{c}{(3)}&\multicolumn{1}{c}{(4)}&\multicolumn{1}{c}{(5)}\\
            &\multicolumn{1}{c}{Constants}&\multicolumn{1}{c}{Time Trend}&\multicolumn{1}{c}{+ Proofing Excl.}&\multicolumn{1}{c}{Variance Trend}&\multicolumn{1}{c}{Preferred}\\
\midrule
\textbf{Coup benefits \(b_{ct}\)}&                     &                     &                     &                     &                     \\
Constant    &      -1.402\sym{***}&       6.979\sym{***}&       27.54\sym{***}&       21.48\sym{**} &       27.88\sym{***}\\
            &    (0.0177)         &     (1.587)         &     (4.597)         &     (9.730)         &     (4.542)         \\
Year        &                     &    -0.00423\sym{***}&     -0.0137\sym{***}&     -0.0105\sym{**} &     -0.0139\sym{***}\\
            &                     &  (0.000801)         &   (0.00230)         &   (0.00487)         &   (0.00227)         \\
Real GDP per capita growth&                     &                     &      -1.432\sym{***}&      -1.126\sym{**} &      -1.425\sym{***}\\
            &                     &                     &     (0.357)         &     (0.495)         &     (0.354)         \\
Log GDP per capita&                     &                     &      -0.181\sym{***}&      -0.245\sym{***}&      -0.179\sym{***}\\
            &                     &                     &    (0.0256)         &    (0.0377)         &    (0.0250)         \\
Regime durability&                     &                     &    -0.00651\sym{***}&    -0.00736\sym{**} &    -0.00623\sym{***}\\
            &                     &                     &   (0.00182)         &   (0.00303)         &   (0.00182)         \\
Polity Index&                     &                     &     -0.0126\sym{***}&     -0.0215\sym{***}&     -0.0106\sym{**} \\
            &                     &                     &   (0.00489)         &   (0.00686)         &   (0.00464)         \\
Population growth rate&                     &                     &      -1.622         &       0.313         &                     \\
            &                     &                     &     (2.056)         &     (2.822)         &                     \\
Log protests&                     &                     &                     &       0.190\sym{***}&                     \\
            &                     &                     &                     &    (0.0544)         &                     \\
\midrule
\textbf{Regime strength \(z_{ct}\)}&                     &                     &                     &                     &                     \\
Constant    &      0.0731\sym{***}&       0.668         &      -6.405         &       0.428         &       0.150         \\
            &    (0.0222)         &     (1.655)         &     (5.243)         &     (10.05)         &    (0.0983)         \\
Year        &                     &   -0.000298         &     0.00330         &  0.00000977         &                     \\
            &                     &  (0.000836)         &   (0.00261)         &   (0.00504)         &                     \\
Real GDP per capita growth&                     &                     &    -0.00555         &       0.596         &                     \\
            &                     &                     &     (0.438)         &     (0.686)         &                     \\
Log GDP per capita&                     &                     &     0.00311         &    -0.00965         &                     \\
            &                     &                     &    (0.0282)         &    (0.0415)         &                     \\
Regime durability&                     &                     &    -0.00160         &    -0.00558         &                     \\
            &                     &                     &   (0.00202)         &   (0.00373)         &                     \\
Polity Index&                     &                     &    -0.00816         &     -0.0182\sym{**} &                     \\
            &                     &                     &   (0.00499)         &   (0.00792)         &                     \\
Population growth rate&                     &                     &       7.561\sym{***}&       8.532\sym{**} &       7.302\sym{***}\\
            &                     &                     &     (2.510)         &     (3.743)         &     (2.683)         \\
Coup-proofing&                     &                     &      -0.133\sym{***}&      -0.175\sym{***}&      -0.112\sym{***}\\
            &                     &                     &    (0.0496)         &    (0.0632)         &    (0.0426)         \\
\midrule
\textbf{Log signal dispersion \(\ln\sigma_{ct}\)}&                     &                     &                     &                     &                     \\
Constant    &      0.0189         &      0.0153         &     -0.0304         &      -33.74\sym{*}  &      -22.49\sym{***}\\
            &    (0.0271)         &    (0.0262)         &    (0.0476)         &     (17.39)         &     (6.826)         \\
Year        &                     &                     &                     &      0.0169\sym{*}  &      0.0113\sym{***}\\
            &                     &                     &                     &   (0.00867)         &   (0.00342)         \\
Information reliability&                     &                     &                     &      -0.208         &      -0.104\sym{*}  \\
            &                     &                     &                     &     (0.136)         &    (0.0572)         \\
\midrule
\(N\)       &       10764         &       10764         &        5076         &        3243         &        5076         \\
\bottomrule
\multicolumn{6}{l}{\footnotesize Notes: Estimation follows equation (\ref{full_ll}) with $K=200$.}\\
\multicolumn{6}{l}{\footnotesize All explanatory variables are lagged one year.}\\
\multicolumn{6}{l}{\footnotesize The contraction mapping in equation (\ref{cmap}) is iterated until average absolute value error is less than 1e-10.}\\
\multicolumn{6}{l}{\footnotesize *, **, and *** denote significance at the 10\%, 5\%, and 1\% levels.}\\
\end{tabular}
}

\end{table}

In Appendix \ref{fullrob}, we report alternative specifications in which regional and decadal fixed effects are considered, and specifications in which exclusion restrictions are relaxed. As as the case with the simple models, across all alternative specifications, the estimated desirability and regime-strength equations remain remarkably stable. 

Appendix \ref{etatests} reports robustness checks using alternative definitions of weak and strong attacks, varying the threshold values of $\eta_w$ and $\eta_s$. As expected, the estimated parameters exhibit some sensitivity to these classification rules. More stringent definitions of weak and strong attacks (e.g., $\eta_w=.05$ and $\eta_s=.95$) generally lead to larger estimates of $\sigma_{ct}$, while wider thresholds result in less precise estimates of the information equation. Nevertheless, although the information parameters display some sensitivity to these alternative specifications, the principal structural decomposition remains robust.

\section{Counterfactuals} \label{app}

As a way of exploring the implications of the estimated model, we conduct a series of counterfactual exercises. We do not view the resulting counterfactuals as providing definitive conclusions about the determinants of coups. Rather, they are intended as illustrations of how the model may be used to interpret the history of political upheaval over the past several decades and to conduct simple thought experiments regarding alternative historical trajectories. Throughout this section, we use the parsimonious specification reported in the final column of Table \ref{tab} as our baseline model for simulations

The simulations proceed as follows. Given the estimated equations for coup benefits, regime strength, and signal dispersion, together with the observed history of coup occurrence, coup success, and coup strength (weak, intermediate, or strong), we recover latent disturbances corresponding to the three equations of the model. Specifically, we draw latent values for $\epsilon_b$, $\epsilon_z$, and $\epsilon_s$ that are consistent with both the estimated linear indices and the observed coup outcomes. We generate $S=200$ such latent histories. These recovered latent histories provide a foundation for conducting counterfactual exercises while preserving the structure of the observed data.

We begin with a simple thought experiment that operates through the coup-benefit equation. Specifically, we ask how the distribution of coups might have differed in a more democratic world. Within the model, this is equivalent to asking how coup outcomes would change if the average benefits from a successful coup were reduced. To examine this question, we increase the Polity Index by one sample standard deviation. Using the estimates reported in Table \ref{tab:table2}, this change lowers the coup-benefit index by approximately $\Delta b_{ct} = \beta_{\text{polity}}\Delta\textit{Polity} $ where $\beta_{\text{polity}}<0$. Table \ref{tab} reports the resulting changes in the frequency and composition of coups.

\begin{table}[!htbp]
\centering
\footnotesize
\caption{Counterfactual -- A One-Standard-Deviation Increase in the Polity Index}
\label{tab}
\begin{tabular}{lccc}
\toprule
Outcome & Observed & Counterfactual & \% Change \\
\midrule
No coups &      4678 &   4706.07 (     4.70) &      0.60 (     0.10) \\
Coups &       398 &    369.93 (     4.70) &     -7.05 (     1.18) \\
Failed coups &       225 &    207.39 (     5.65) &     -7.83 (     2.51) \\
Successful coups &       173 &    162.54 (     3.14) &     -6.04 (     1.82) \\
\bottomrule
\end{tabular}
\vspace{0.1in}
\begin{flushleft}
\footnotesize Standard deviations and means are computed across 200 draws of latent factors consistent with the observed frequency and nature of coups, and the estimates presented in column 5 of table \ref{tab:table2}.
\end{flushleft}

\end{table}

The results suggest that a more democratic world would have experienced fewer coups. Interpreted within the model framework, this occurs because greater democracy reduces the expected benefits of participating in a coup, thereby lowering the frequency with which coup attempts are undertaken; not, for example, because more democratic regimes are inherently stronger. Quantitatively, the simulations imply that the number of coups would have fallen by approximately 7 percent, with somewhat larger reductions in failed coups than in successful coups.

We hasten to add that this exercise should be interpreted as a partial-equilibrium counterfactual. The only channel through which democracy operates is the direct effect captured in the estimated coup-benefit equation. We do not allow changes in democracy to affect other determinants of coups. In practice, however, greater democracy may influence economic growth, information quality, regime durability, or other variables that also enter the model. To the extent that such relationships exist, the overall effect of democratization would include additional indirect channels beyond those considered here. Incorporating these broader interactions would require a substantially richer model in which our model of coup activity is nested in an encompassing general equilibrium model of economic activity, and is therefore left for future research.

Table \ref{tab} presents the results of a second counterfactual exercise. As with the previous experiment, the objective is not to provide a definitive estimate of the effects of a particular policy, but rather to illustrate the distinct role played by regime strength in the model. To do so, we consider a one-standard-deviation increase in the coup-proofing variable. Using the estimates reported in Table \ref{tab:table2}, this change alters the regime-strength index by approximately
$\Delta z_{ct} = \zeta_{\text{proofing}} \, \Delta \textit{Coup-Proofing}$ 
where the estimated coefficient on coup-proofing is negative, implying a reduction in the average regime-strength index. Consequently, the counterfactual generates fewer failed coups and more successful coups. 

\begin{table}[!htbp]
\centering
\footnotesize
\caption{Counterfactual -- Coup-Proofing}
\label{tabcp}
\begin{tabular}{lccc}
\toprule
Outcome & Observed & Counterfactual & \% Change \\
\midrule
No coups &      4678 &   4678.00 (     0.00) &      0.00 (     0.00) \\
Coups &       398 &    398.00 (     0.00) &      0.00 (     0.00) \\
Failed coups &       225 &    198.13 (     4.26) &    -11.94 (     1.89) \\
Successful coups &       173 &    199.87 (     4.26) &     15.53 (     2.46) \\
\bottomrule
\end{tabular}
\vspace{0.1in}
\begin{flushleft}
\footnotesize Notes: Counterfactual increases coup-proofing by one sample standard deviation. Counterfactual values are means across latent-history draws; standard deviations across draws are in parentheses.
\end{flushleft}

\end{table}

The results differ markedly from those of the democracy counterfactual. Whereas changes in the coup-benefit equation primarily affect the frequency of coup attempts, changes in regime strength primarily affect the probability that a given coup succeeds. As expected from the structure of the model, decreasing regime strength has essentially no effect on the number of coup attempts. Instead, it changes the composition of observed coups by shifting some failed  coups into the successful category.

Quantitatively, the simulations suggest that weaker regimes would have generated approximately 15 percent more successful coups and a corresponding reduction in failed coups. Put differently, a number of coups that failed in the observed data would instead have succeeded had regimes possessed greater underlying strength, but the number of coup attempts would not have changed. This distinction highlights the role of the regime-strength equation in the model: rather than influencing whether citizens attempt a coup, it primarily influences whether the regime can withstand an attempt once it occurs.

As with the previous exercise, this counterfactual should be interpreted as a partial-equilibrium thought experiment. We alter only the regime-strength index while holding all other determinants fixed. In reality, coup-proofing measures may themselves be related to deeper political and economic conditions that also influence coup activity. Indeed, the negative relationship between coup-proofing and estimated regime strength reported in Table \ref{tabcp} suggests that governments may adopt coup-proofing institutions in response to underlying vulnerability. Incorporating such feedback mechanisms would require a richer model of the joint determination of coup-proofing and regime stability and is beyond the scope of the present analysis. 

A final counterfactual focuses on the third index in the model, namely the information equation governing $\sigma_{ct}$. As before, we ask a simple question: how might the history of coups have differed if citizens had, on average, access to more reliable information? To examine this possibility, we consider a one-standard-deviation increase in the information reliability index. Using the estimates reported in Table \ref{tab:table2}, this change reduces the log signal standard deviation by approximately $\Delta \ln \sigma_{ct}=\beta_{\mathrm{info}}\Delta \textit{Information\ Reliability}$ thereby increasing the precision of private information available to potential coup participants. The results of this exercise are reported in Table \ref{tab:tableir}, where we have now reported how the distribution of coups across success, failure, and attack strength might change under the counterfactual. 

\begin{table}[!htbp]
\centering
\footnotesize
\caption{Counterfactual -- Information Reliability}
\label{tab:tableir}
\begin{tabular}{lccc}
\toprule
Outcome & Observed & Counterfactual & \% Change \\
\midrule
No coups &      4678 &   4678.00 (     0.00) &      0.00 (     0.00) \\
Coups &       398 &    398.00 (     0.00) &      0.00 (     0.00) \\
Failed coups &       225 &    224.59 (     1.29) &     -0.18 (     0.57) \\
Successful coups &       173 &    173.41 (     1.29) &      0.24 (     0.74) \\
\addlinespace
Weak &        98 &    100.96 (     2.11) &      3.02 (     2.15) \\
Neither &       261 &    256.95 (     3.03) &     -1.55 (     1.16) \\
Strong &        39 &     40.09 (     2.06) &      2.79 (     5.28) \\
\addlinespace
\midrule
Strong failures &         3 &      1.21 (     0.84) &    -59.67 (    28.08) \\
Weak successes &         0 &      0.17 (     0.38) & -- \\
Coordination error rate (\%) &      0.75 &      0.35 (     0.23) &    -53.83 (    30.20) \\
\bottomrule
\end{tabular}
\vspace{0.1in}
\begin{flushleft}
\footnotesize Notes: Counterfactual increases the information reliability index by one sample standard deviation. This changes \(x_s\beta_s\), where \(\sigma=\exp(x_s\beta_s)\). Counterfactual values are means across latent-history draws; standard deviations across draws are in parentheses. Strong failures and weak successes are coordination-error cross-cases.
\end{flushleft}

\end{table}

As in the coup-proofing counterfactual, changes to the information equation have little effect on the overall incidence of coups. This is consistent with the structure of the model, in which the frequency of coup attempts is determined primarily by the coup-benefit equation. Nor do improvements in information quality substantially alter the overall success rate of coups. Instead, the principal effect of better information is to alter the composition of coup outcomes.

In particular, the simulations summarized in Table \ref{tab:tableir} suggest that improved information reduces the frequency of what might be termed coordination failures. The number of strong coups that fail falls from approximately three in the observed data to roughly one under the counterfactual. Although quantitatively modest, this result is theoretically significant. Strong failed coups correspond to situations in which participation is high even though the coup ultimately fails, implying that a substantial number of citizens incur the costs of participation without achieving regime change. From the perspective of the model, these outcomes represent particularly severe coordination failures.

Viewed through the lens of the simple model presented earlier, one might think of these cases as deviations from a first-best benchmark. In that benchmark, successful coups would be accompanied by widespread participation, while failed coups would attract little or no participation. Better information moves outcomes closer to this benchmark by reducing the frequency with which citizens coordinate on participation levels that are poorly aligned with the underlying strength of the regime. In this sense, the information counterfactual highlights a distinct role for information quality: rather than changing how often coups occur, it improves the efficiency with which citizens coordinate their participation decisions.

As with the previous counterfactuals, this exercise should be interpreted as a partial-equilibrium thought experiment. We allow information quality to change while holding all other determinants of coup activity fixed. In practice, information quality may be correlated with broader political and economic developments that also influence coup incentives and regime strength. Modeling these interactions is an important topic for future work but lies beyond the scope of the present analysis.

\section{Conclusion} \label{conc}

Understanding why coups occur remains a central question in the study of political instability. At the same time, coups present an inherently strategic problem, as their success depends upon the ability of citizens, military officers, and political actors to coordinate their actions in the face of uncertainty. While economists and political scientists have developed rich theoretical models of coup activity, comparatively little work has sought to estimate such models structurally using cross-country data. In this paper, we have sought to bridge that gap by developing and estimating a structural model of coups derived from a global-games framework.

A central feature of the approach is the decomposition of coup activity into three distinct components: the benefits of participating in a successful coup, the strength of the regime against which the coup is directed, and the quality of information available to potential participants. This decomposition provides a natural framework for interpreting many of the variables commonly employed in the empirical coups literature. More importantly, it requires the researcher to think carefully about which factors affect the desirability of a coup, which affect its feasibility, and which influence the ability of citizens to coordinate their actions. Empirically, this decomposition relies on a combination of equilibrium restrictions, normalization assumptions, and exclusion restrictions that distinguish the desirability and feasibility dimensions of regime change. As with other structural latent-variable models, this decomposition is identified under a set of maintained equilibrium, normalization, and exclusion assumptions.

The counterfactual exercises illustrate the value of this decomposition. Because the model separately identifies coup desirability, regime strength, and information quality, it becomes possible to distinguish policies that primarily affect the frequency of coups from those that influence the likelihood of success or the efficiency of coordination among potential participants.

The framework also admits a wide range of extensions, including richer measures of participation, dynamic strategic interactions, cross-country spillovers, and more systematic methods for distinguishing determinants of coup desirability from those of regime strength. Because the model is fully structural, it also lends itself naturally to a wide range of counterfactual exercises. One might examine the consequences of alternative Cold War trajectories, different paths of democratization, changes in information technology, or the evolution of state capacity in developing countries. Such questions are difficult to address using reduced-form approaches alone, but arise naturally within the framework developed here.

Ultimately, coups are neither purely political nor purely economic phenomena; they emerge from the interaction of incentives, institutions, and information. By providing a framework in which these forces can be modeled jointly and applied to the data, we hope this framework offers a useful foundation for future work on the causes, consequences, and prevention of political upheaval.

\newpage
\begin{appendices} \label{appendix}

\counterwithin{table}{section}
\renewcommand{\thetable}{\thesection.\arabic{table}}

\section{Robustness}
In this appendix, we examine the robustness of our baseline specification, using the preferred models reported in Tables \ref{tab:table1} and \ref{tab:table2} as benchmarks. The first two tables investigate the sensitivity of the estimates to several alternative specifications. These include relaxing the exclusion restrictions used to achieve nonparametric identification, thereby relying instead on the index structure of the model for identification, as well as introducing alternative regional and temporal controls. Although the nature of the data precludes the inclusion of full country and year fixed effects, these specifications are intended to approximate such controls where possible. The principal conclusion is that the baseline specification proves remarkably robust to these alternative assumptions.

The final set of robustness checks examines the assumptions linking attack size to the model. Our baseline specification classifies attacks with participation rates of at least 90\% as strong and those with participation rates of at most 10\% as weak. We therefore re-estimate the model under alternative threshold values for these classifications. As expected, the estimated information parameters exhibit some sensitivity to these assumptions, but the underlying structural decomposition and the substantive counterfactual conclusions remain highly robust.

\subsection{Simple model} \label{simprob}

To examine the sensitivity of the simple model to alternative identifying assumptions, Table \ref{tab} reports a series of robustness exercises. Column (1) reproduces the preferred specification, while Columns (2)–(5) successively relax the exclusion restrictions and introduce alternative control structures. Specifically, Column (2) allows coup-proofing measures to enter both the coup-benefit and regime-strength equations, Column (3) allows protests to enter both equations, Column (4) includes World Bank regional fixed effects, and Column (5) replaces the linear time trend with decade fixed effects.

Across these alternative specifications, the principal parameter estimates remain remarkably stable in both sign and magnitude. In particular, the estimated effects of economic performance, regime durability, and democracy on coup desirability, together with the effects of population growth and coup-proofing on regime strength, vary little across specifications. These results suggest that the empirical decomposition into coup desirability and regime strength is not an artifact of the identifying restrictions or the particular treatment of regional and temporal effects.

\begin{table}[!htbp]
\centering
\scriptsize
\caption{Estimated models of mean benefits and regime strength}
\vspace{.3cm}
\label{tab:tableA1}
{
\def\sym#1{\ifmmode^{#1}\else\(^{#1}\)\fi}
\begin{tabular}{l*{5}{c}}
\toprule
            &\multicolumn{1}{c}{(1)}&\multicolumn{1}{c}{(2)}&\multicolumn{1}{c}{(3)}&\multicolumn{1}{c}{(4)}&\multicolumn{1}{c}{(5)}\\
            &\multicolumn{1}{c}{Preferred}&\multicolumn{1}{c}{Coup Proofing}&\multicolumn{1}{c}{Ln Protests}&\multicolumn{1}{c}{Region FE}&\multicolumn{1}{c}{Decade FE}\\
\midrule
\textbf{Coup benefits $b_{ct}$}&                     &                     &                     &                     &                     \\
Year        &     -0.0136\sym{***}&     -0.0137\sym{***}&     -0.0105\sym{**} &     -0.0136\sym{***}&                     \\
            &   (0.00231)         &   (0.00233)         &   (0.00484)         &   (0.00248)         &                     \\
Real GDP per capita growth&      -1.427\sym{***}&      -1.414\sym{***}&      -1.122\sym{**} &      -1.497\sym{***}&      -1.499\sym{***}\\
            &     (0.356)         &     (0.356)         &     (0.498)         &     (0.364)         &     (0.360)         \\
Log GDP per capita&      -0.175\sym{***}&      -0.176\sym{***}&      -0.240\sym{***}&      -0.118\sym{***}&      -0.178\sym{***}\\
            &    (0.0254)         &    (0.0257)         &    (0.0382)         &    (0.0338)         &    (0.0258)         \\
Regime durability&    -0.00646\sym{***}&    -0.00650\sym{***}&    -0.00738\sym{**} &    -0.00633\sym{***}&    -0.00669\sym{***}\\
            &   (0.00185)         &   (0.00185)         &   (0.00302)         &   (0.00186)         &   (0.00187)         \\
Polity Index&     -0.0119\sym{**} &     -0.0130\sym{***}&     -0.0216\sym{***}&     -0.0174\sym{***}&     -0.0131\sym{***}\\
            &   (0.00472)         &   (0.00489)         &   (0.00693)         &   (0.00533)         &   (0.00489)         \\
Constant    &       27.29\sym{***}&       27.49\sym{***}&       21.37\sym{**} &       26.78\sym{***}&       0.506\sym{**} \\
            &     (4.607)         &     (4.641)         &     (9.667)         &     (4.927)         &     (0.230)         \\
Population growth rate&                     &      -1.996         &      0.0916         &      -2.986         &      -2.044         \\
            &                     &     (2.086)         &     (2.854)         &     (2.459)         &     (2.086)         \\
Coup-proofing measures&                     &     -0.0128         &                     &                     &                     \\
            &                     &    (0.0439)         &                     &                     &                     \\
Log protests&                     &                     &       0.174\sym{***}&                     &                     \\
            &                     &                     &    (0.0556)         &                     &                     \\
\midrule
\textbf{Regime strength $z_{ct}$}&                     &                     &                     &                     &                     \\
Year        &      0.0178\sym{***}&      0.0166\sym{***}&      0.0173         &      0.0133\sym{**} &                     \\
            &   (0.00560)         &   (0.00608)         &    (0.0132)         &   (0.00650)         &                     \\
Population growth rate&       18.41\sym{***}&       18.94\sym{***}&       14.39\sym{*}  &       17.97\sym{***}&       18.55\sym{***}\\
            &     (5.871)         &     (6.030)         &     (8.270)         &     (6.946)         &     (6.017)         \\
Coup-proofing measures&      -0.178         &      -0.184\sym{*}  &      -0.378\sym{**} &      -0.159         &      -0.204\sym{*}  \\
            &     (0.110)         &     (0.111)         &     (0.161)         &     (0.116)         &     (0.112)         \\
Constant    &      -35.08\sym{***}&      -32.58\sym{***}&      -33.27         &      -26.57\sym{**} &     -0.0254         \\
            &     (11.12)         &     (12.19)         &     (26.24)         &     (12.95)         &     (0.646)         \\
Real GDP per capita growth&                     &       0.101         &       0.868         &       0.243         &       0.192         \\
            &                     &     (0.931)         &     (1.462)         &     (0.954)         &     (0.937)         \\
Log GDP per capita&                     &    -0.00709         &     -0.0187         &      0.0252         &     0.00305         \\
            &                     &    (0.0679)         &     (0.104)         &    (0.0915)         &    (0.0686)         \\
Regime durability&                     &     0.00272         &   -0.000503         &     0.00238         &     0.00219         \\
            &                     &   (0.00401)         &   (0.00862)         &   (0.00410)         &   (0.00408)         \\
Polity Index&                     &     0.00710         &     0.00311         &     0.00899         &     0.00760         \\
            &                     &    (0.0131)         &    (0.0200)         &    (0.0136)         &    (0.0131)         \\
Log protests&                     &                     &      -0.181         &                     &                     \\
            &                     &                     &     (0.143)         &                     &                     \\
\midrule
\(N\)       &        5076         &        5076         &        3243         &        5076         &        5076         \\
\bottomrule
\multicolumn{6}{l}{\footnotesize Notes: Estimation follows equation (\ref{simplell}) with $K=100$.}\\
\multicolumn{6}{l}{\footnotesize All explanatory variables are lagged one year.}\\
\multicolumn{6}{l}{\footnotesize *, **, and *** denote significance at the 10\%, 5\%, and 1\% levels.}\\
\end{tabular}
}

\end{table}

\subsection{Full model} \label{fullrob}

Table \ref{tab:tableA2} reports analogous robustness exercises for the full structural model. As before, Column (1) presents the preferred specification, while Columns (2)–(5) successively relax the exclusion restrictions and introduce alternative regional and temporal controls. Overall, the estimated structural parameters remain stable across specifications. The determinants of coup desirability exhibit nearly identical signs and magnitudes throughout, while the estimated effects of population growth and coup-proofing on regime strength vary little despite substantial changes in specification. Estimates of the signal-dispersion equation display somewhat greater sensitivity, reflecting the comparatively weaker information available to identify the information structure, but the qualitative conclusions remain unchanged. Taken together, these results suggest that the empirical decomposition into coup desirability, regime strength, and information precision is not an artifact of the identifying restrictions or the particular treatment of regional and temporal effects.

\begin{table}[!htbp]
\centering
\scriptsize
\caption{Estimated models of mean benefits and regime strength}
\vspace{.3cm}
\label{tab:tableA2}
{
\def\sym#1{\ifmmode^{#1}\else\(^{#1}\)\fi}
\begin{tabular}{l*{5}{c}}
\toprule
            &\multicolumn{1}{c}{(1)}&\multicolumn{1}{c}{(2)}&\multicolumn{1}{c}{(3)}&\multicolumn{1}{c}{(4)}&\multicolumn{1}{c}{(5)}\\
            &\multicolumn{1}{c}{Preferred}&\multicolumn{1}{c}{Coup-proofing}&\multicolumn{1}{c}{Ln Protests}&\multicolumn{1}{c}{Region FE}&\multicolumn{1}{c}{Decade FE}\\
\midrule
\textbf{Coup benefits $b_{ct}$}&                     &                     &                     &                     &                     \\
Year        &     -0.0139\sym{***}&     -0.0136\sym{***}&     -0.0104\sym{**} &     -0.0135\sym{***}&                     \\
            &   (0.00227)         &   (0.00230)         &   (0.00475)         &   (0.00246)         &                     \\
Real GDP per capita growth&      -1.425\sym{***}&      -1.419\sym{***}&      -1.138\sym{**} &      -1.501\sym{***}&      -1.500\sym{***}\\
            &     (0.354)         &     (0.354)         &     (0.494)         &     (0.358)         &     (0.358)         \\
Log GDP per capita&      -0.179\sym{***}&      -0.180\sym{***}&      -0.244\sym{***}&      -0.122\sym{***}&      -0.182\sym{***}\\
            &    (0.0250)         &    (0.0255)         &    (0.0377)         &    (0.0335)         &    (0.0257)         \\
Regime durability&    -0.00623\sym{***}&    -0.00650\sym{***}&    -0.00740\sym{**} &    -0.00632\sym{***}&    -0.00672\sym{***}\\
            &   (0.00182)         &   (0.00184)         &   (0.00300)         &   (0.00185)         &   (0.00185)         \\
Polity Index&     -0.0106\sym{**} &     -0.0126\sym{***}&     -0.0211\sym{***}&     -0.0170\sym{***}&     -0.0127\sym{***}\\
            &   (0.00464)         &   (0.00484)         &   (0.00684)         &   (0.00528)         &   (0.00484)         \\
Constant    &       27.88\sym{***}&       27.38\sym{***}&       21.28\sym{**} &       26.67\sym{***}&       0.544\sym{**} \\
            &     (4.542)         &     (4.593)         &     (9.493)         &     (4.889)         &     (0.229)         \\
Population growth rate&                     &      -1.617         &       0.229         &      -2.590         &      -1.675         \\
            &                     &     (2.056)         &     (2.801)         &     (2.429)         &     (2.064)         \\
Coup-proofing measures&                     &     -0.0125         &                     &                     &                     \\
            &                     &    (0.0434)         &                     &                     &                     \\
Log protests&                     &                     &       0.174\sym{***}&                     &                     \\
            &                     &                     &    (0.0550)         &                     &                     \\
\midrule
\textbf{Regime strength $z_{ct}$}&                     &                     &                     &                     &                     \\
Population growth rate&       7.302\sym{***}&       6.623\sym{**} &       8.274\sym{**} &       3.454         &       5.507\sym{*}  \\
            &     (2.683)         &     (2.897)         &     (3.636)         &     (3.437)         &     (2.893)         \\
Coup-proofing measures&      -0.112\sym{***}&      -0.131\sym{**} &      -0.145\sym{**} &      -0.109\sym{**} &      -0.140\sym{***}\\
            &    (0.0426)         &    (0.0545)         &    (0.0660)         &    (0.0520)         &    (0.0542)         \\
Constant    &       0.150         &      -6.084         &       0.606         &      -1.059         &       0.183         \\
            &    (0.0983)         &     (6.093)         &     (10.97)         &     (6.227)         &     (0.285)         \\
Year        &                     &     0.00321         &  -0.0000850         &    0.000334         &                     \\
            &                     &   (0.00305)         &   (0.00550)         &   (0.00313)         &                     \\
Real GDP per capita growth&                     &      -0.116         &       0.235         &      -0.278         &      -0.168         \\
            &                     &     (0.495)         &     (0.683)         &     (0.443)         &     (0.466)         \\
Log GDP per capita&                     &     -0.0114         &     0.00809         &      0.0734\sym{*}  &    -0.00480         \\
            &                     &    (0.0328)         &    (0.0425)         &    (0.0428)         &    (0.0295)         \\
Regime durability&                     &    -0.00135         &    -0.00583         &    -0.00141         &    -0.00175         \\
            &                     &   (0.00214)         &   (0.00441)         &   (0.00211)         &   (0.00220)         \\
Polity Index&                     &    -0.00722         &     -0.0141         &    -0.00540         &    -0.00728         \\
            &                     &   (0.00632)         &   (0.00915)         &   (0.00650)         &   (0.00608)         \\
Log protests&                     &                     &      -0.130\sym{**} &                     &                     \\
            &                     &                     &    (0.0634)         &                     &                     \\
\midrule
\textbf{Log signal dispersion \(\ln\sigma_{ct}\)}&                     &                     &                     &                     &                     \\
Year        &      0.0113\sym{***}&     0.00874\sym{**} &      0.0173\sym{*}  &     0.00894\sym{**} &                     \\
            &   (0.00342)         &   (0.00429)         &   (0.00909)         &   (0.00386)         &                     \\
Information reliability&      -0.104\sym{*}  &     -0.0681         &      -0.188         &     -0.0461         &     -0.0812         \\
            &    (0.0572)         &    (0.0626)         &     (0.141)         &    (0.0650)         &    (0.0656)         \\
Constant    &      -22.49\sym{***}&      -17.44\sym{**} &      -34.54\sym{*}  &      -17.84\sym{**} &      -0.125         \\
            &     (6.826)         &     (8.529)         &     (18.22)         &     (7.689)         &     (0.114)         \\
\midrule
\(N\)       &        5076         &        5076         &        3243         &        5076         &        5076         \\
\bottomrule
\multicolumn{6}{l}{\footnotesize Notes: Estimation follows equation (\ref{simplell}) with $K=200$.}\\
\multicolumn{6}{l}{\footnotesize All explanatory variables are lagged one year.}\\
\multicolumn{6}{l}{\footnotesize *, **, and *** denote significance at the 10\%, 5\%, and 1\% levels.}\\
\end{tabular}
}

\end{table}

\subsection{Attack strength} \label{etatests}

Table \ref{tab:tableA3} examines the sensitivity of the full model to alternative definitions of weak and strong coups by varying the participation thresholds used to classify observed events. Across all specifications, the estimated coup-desirability and regime-strength equations remain stable, with both the signs and magnitudes of the principal coefficients exhibiting little variation. In contrast, the signal-dispersion equation displays greater sensitivity to the choice of participation thresholds. As the definitions of weak and strong coups become less restrictive, the estimated effect of information reliability attenuates and eventually loses statistical significance, while retaining its predicted negative sign. This pattern is consistent with the role of the participation thresholds in identifying the information process: broader thresholds admit observations that are less clearly indicative of weak or strong coordination, thereby reducing the precision with which signal dispersion can be estimated. Nevertheless, the underlying structural decomposition and the counterfactual conclusions that follow from it remain highly robust to reasonable alternative classifications of participation.

\begin{table}[!htbp]
\centering
\scriptsize
\caption{Estimated models of mean benefits and regime strength}
\vspace{.3cm}
\label{tab:tableA3}
{
\def\sym#1{\ifmmode^{#1}\else\(^{#1}\)\fi}
\begin{tabular}{l*{4}{c}}
\toprule
            &\multicolumn{1}{c}{(1)}&\multicolumn{1}{c}{(2)}&\multicolumn{1}{c}{(3)}&\multicolumn{1}{c}{(4)}\\
            &\multicolumn{1}{c}{Baseline}&\multicolumn{1}{c}{$\eta\_w=.05, \eta\_s=.95$}&\multicolumn{1}{c}{$\eta\_w=.20, \eta\_s=.80$}&\multicolumn{1}{c}{$\eta\_w=.25, \eta\_s=.75$}\\
\midrule
\textbf{Coup benefits $b\_{ct}$}&                     &                     &                     &                     \\
Year        &     -0.0139\sym{***}&     -0.0139\sym{***}&     -0.0139\sym{***}&     -0.0139\sym{***}\\
            &   (0.00225)         &   (0.00228)         &   (0.00233)         &   (0.00223)         \\
Real GDP per capita growth&      -1.409\sym{***}&      -1.448\sym{***}&      -1.409\sym{***}&      -1.397\sym{***}\\
            &     (0.345)         &     (0.352)         &     (0.393)         &     (0.331)         \\
Log GDP per capita&      -0.179\sym{***}&      -0.178\sym{***}&      -0.178\sym{***}&      -0.178\sym{***}\\
            &    (0.0249)         &    (0.0253)         &    (0.0254)         &    (0.0246)         \\
Regime durability&    -0.00630\sym{***}&    -0.00624\sym{***}&    -0.00646\sym{***}&    -0.00641\sym{***}\\
            &   (0.00180)         &   (0.00181)         &   (0.00181)         &   (0.00178)         \\
Polity Index&     -0.0110\sym{**} &     -0.0105\sym{**} &     -0.0112\sym{**} &     -0.0112\sym{**} \\
            &   (0.00460)         &   (0.00459)         &   (0.00476)         &   (0.00455)         \\
Constant    &       28.02\sym{***}&       27.87\sym{***}&       27.98\sym{***}&       27.87\sym{***}\\
            &     (4.492)         &     (4.565)         &     (4.645)         &     (4.453)         \\
\midrule
\textbf{Regime strength $z\_{ct}$}&                     &                     &                     &                     \\
Population growth rate&       7.005\sym{***}&       8.804\sym{**} &       5.641\sym{**} &       5.047\sym{**} \\
            &     (2.674)         &     (3.892)         &     (2.684)         &     (2.466)         \\
Coup-proofing measures&      -0.127\sym{***}&      -0.123\sym{***}&      -0.100\sym{**} &     -0.0949\sym{**} \\
            &    (0.0430)         &    (0.0454)         &    (0.0461)         &    (0.0476)         \\
Constant    &       0.201\sym{*}  &       0.145         &       0.166\sym{*}  &       0.160\sym{*}  \\
            &     (0.104)         &     (0.118)         &     (0.100)         &    (0.0974)         \\
\midrule
\textbf{Log signal dispersion \(\ln\sigma\_{ct}\)}&                     &                     &                     &                     \\
Year        &     0.00817\sym{**} &      0.0193\sym{***}&     0.00458         &     0.00381         \\
            &   (0.00381)         &   (0.00611)         &   (0.00374)         &   (0.00335)         \\
Information reliability&     -0.0930\sym{*}  &      -0.153\sym{**} &     -0.0242         &     -0.0189         \\
            &    (0.0533)         &    (0.0599)         &    (0.0659)         &    (0.0577)         \\
Constant    &      -16.30\sym{**} &      -38.41\sym{***}&      -9.193         &      -7.682         \\
            &     (7.589)         &     (12.20)         &     (7.455)         &     (6.659)         \\
\midrule
\(N\)       &        5076         &        5076         &        5076         &        5076         \\
\bottomrule
\multicolumn{5}{l}{\footnotesize Notes: Estimation follows equation (\ref{simplell}) with $K=100$.}\\
\multicolumn{5}{l}{\footnotesize All explanatory variables are lagged one year.}\\
\multicolumn{5}{l}{\footnotesize *, **, and *** denote significance at the 10\%, 5\%, and 1\% levels.}\\
\end{tabular}
}

\end{table}

\end{appendices}

\clearpage
\newpage
\singlespacing
\bibliographystyle{chicago} 
\bibliography{sources} 

@book{AbiddeKumahAbiwu2023,
  author = {Abidde, Sabella O. and Kumah-Abiwu, Felix},
  title = {The Political Impact of African Military Leaders: Soldiers as Intellectuals, Nationalists, Pan-Africanists, and Statesmen},
  series = {Contributions to Political Science},
  publisher = {Springer International Publishing},
  year = {2023}
}

@article{acemoglu2001extension,
  author  = {Acemoglu, Daron and Robinson, James A.},
  title   = {A Theory of Political Transitions},
  journal = {American Economic Review},
  volume  = {91},
  number  = {4},
  pages   = {938--963},
  year    = {2001}
}

@book{acemoglu2006economic,
  author    = {Acemoglu, Daron and Robinson, James A.},
  title     = {Economic Origins of Dictatorship and Democracy},
  publisher = {Cambridge University Press},
  address   = {New York},
  year      = {2006}
}

@article{acemoglu2010revolution,
  author  = {Acemoglu, Daron and Ticchi, Davide and Vindigni, Andrea},
  title   = {A Theory of Military Dictatorships},
  journal = {American Economic Journal: Macroeconomics},
  volume  = {2},
  number  = {1},
  pages   = {1--42},
  year    = {2010}
}

@article{acemoglu2008income,
  author  = {Acemoglu, Daron and Robinson, James A.},
  title   = {Persistence of Power, Elites, and Institutions},
  journal = {American Economic Review},
  volume  = {98},
  number  = {1},
  pages   = {267--293},
  year    = {2008}
}

@article{alesina1996,
  author  = {Alesina, Alberto and Özler, Sule and Roubini, Nouriel and Swagel, Phillip},
  title   = {Political Instability and Economic Growth},
  journal = {Journal of Economic Growth},
  year    = {1996},
  volume  = {1},
  number  = {2},
  pages   = {189--211},
  doi     = {10.1007/BF00138862}
}

@article{AllenEtAl2022,
  author = {Michael A. Allen and Thomas Campbell and Nicolas Hernandez and Valeryn Shepherd},
  title = {US Military Deployments and the Risk of Coup d’État},
  journal = {Foreign Policy Analysis},
  volume = {19},
  year = {2022}, 
  pages={1--20},
  doi = {10.1093/fpa/orac027}
}

@article{angeletos2007a,
  title= {Dynamic Global Games of Regime Change: Learning, Multiplicity, and the Timing of Attacks},
  author = {George-Marios Angeletos and Christian Hellwig and Alessandro Pavan},
  year = 2007,
  journal = {Econometrica},
  volume=75,
  number=3,
  pages={711-756}
}

@incollection{Bajari_Hong_Nekipelov_2013, 
  place={Cambridge}, 
  series={Econometric Society Monographs}, 
  title={Game Theory and Econometrics: A Survey of Some Recent Research}, 
  booktitle={Advances in Economics and Econometrics: Tenth World Congress}, 
  publisher={Cambridge University Press}, 
  author={Bajari, Patrick and Hong, Han and Nekipelov, Denis}, 
  editor={Acemoglu, Daron and Arellano, Manuel and Dekel, Eddie}, 
   year={2013}, pages={3–52}, 
  collection={Econometric Society Monographs}
}

@article{barro1991,
  author  = {Barro, Robert J.},
  title   = {Economic Growth in a Cross Section of Countries},
  journal = {The Quarterly Journal of Economics},
  year    = {1991},
  volume  = {106},
  number  = {2},
  pages   = {407--443},
  doi     = {10.2307/2937943}
}

@article{BelkinSchofer2003,
  author = {Aaron Belkin and Evan Schofer},
  title = {Toward a Structural Understanding of Coup Risk},
  journal = {Journal of Conflict Resolution},
  volume = {47},
  pages = {594--620},
  year = {2003}
}

@article{BjornskovRode2020,
  author = {Christian Bjørnskov and Martin Rode},
  title = {Regime Types and Regime Change: A New Dataset on Democracy, Coups, and Political Institutions},
  journal = {The Review of International Organizations},
  volume = {15},
  pages = {531--551},
  year = {2020}
}

@article{BodeaElbadawiHoule2017,
  author = {Cristina Bodea and Ibrahim Elbadawi and Christian Houle},
  title = {Do Civil Wars, Coups and Riots Have the Same Structural Determinants?},
  journal = {International Interactions},
  volume = {43},
  pages = {537--561},
  year = {2017}
}

@article{bueno22, 
  title = {Rebel motivations and repression},
  author = {Ethan Bueno de Mesquita and Mehdi Shadmehr},
  year = 2023,
  journal = {American Political Science Review},
  volume=117,
  number=2,
  pages={734--750}
}

@article{carlsson1993,
  title = {Global {{Games}} and {{Equilibrium Selection}}},
  author = {Carlsson, Hans and {van Damme}, Eric},
  year = {1993},
  journal = {Econometrica},
  volume = {61},
  number = {5},
  pages = {989}
}

@article{casper2014,
  author  = {Casper, Brett Allen and Tyson, Scott A.},
  title   = {Popular Protest and Elite Coordination in a Coup d'{\'E}tat},
  journal = {The Journal of Politics},
  year    = {2014},
  volume  = {76},
  number  = {2},
  pages   = {548--564},
  doi     = {10.1017/S0022381613001485}
}

@article{cebotari_political_2024,
	title = {Political {Fragility}: {Coups} d Etat and {Their} {Drivers}},
	volume = {2024},
	url = {https://www.elibrary.imf.org/view/journals/001/2024/034/article-A001-en.xml},
	doi = {10.5089/9798400266751.001.A001},
	number = {034},
	journal = {IMF Working Papers},
	author = {Cebotari, Aliona and Chueca-Montuenga, Enrique and Diallo, Yoro and Ma, Yunsheng and Turk, Rima A. and Xin, Weining and Zavarce, Harold},
	year = {2024},
	note = {ISBN: 9798400266751
Place: USA
Publisher: International Monetary Fund},
	pages = {A001},
}

@article{chin21,
    author = {Chin, John J and Carter, David B and Wright, Joseph G},
    title = {The Varieties of Coups D’état: Introducing the Colpus Dataset},
    journal = {International Studies Quarterly},
    volume = {65},
    number = {4},
    pages = {1040-1051},
    year = {2021}
}

@article{ChambruManeuvrierHervieu2024,
  author = {Cédric Chambru and Paul Maneuvrier-Hervieu},
  title = {Introducing HiSCoD: A New Gateway for the Study of Historical Social Conflict},
  journal = {American Political Science Review},
  volume = {118},
  pages = {1084--1091},
  year = {2024}
}

@article{ChiozzaKhalifa2024,
  author = {Giacomo Chiozza and Lena Khalifa},
  title = {The Harsh Face of the Empire by Invitation: Coups in the US World Order},
  journal = {Conflict Management and Peace Science},
  volume = {41},
  pages = {110--131},
  year = {2024}
}

@techreport{cipriani2024,
  author      = {Cipriani, Marco and Eisenbach, Thomas M. and Kovner, Anna},
  title       = {Tracing Bank Runs in Real Time},
  institution = {Federal Reserve Bank of New York},
  type        = {Staff Report},
  number      = {1104},
  year        = {2024},
  month       = {May},
  doi         = {10.59576/sr.1104}
}

@misc{cline,
 title = { Cline Center Coup d’État Project Dataset},
  author = {Peyton, Buddy and Joseph Bajjalieh and Dan Shalmon and Michael Martin and Jonathan Bonaguro and Emilio Soto},
year = {2024},
journal = {Cline Center for Advanced SocialResearch. V.2.1.3.},
publisher = {University of Illinois Urbana-Champaign},
doi = {10.13012/B2IDB-9651987_V7}
}

@misc{ClineCenter,
  author       = {{Cline Center for Advanced Social Research}},
  title        = {Cline Center for Advanced Social Research},
  year         = {2026},
  url          = {https://clinecenter.illinois.edu/},
  note         = {Accessed June 25, 2026}
}

@incollection{Darkwa2023,
  author = {Samuel Kofi Darkwa},
  title = {Military Coup D’états in Africa: A Survey},
  booktitle = {The Political Impact of African Military Leaders: Soldiers as Intellectuals, Nationalists, Pan-Africanists, and Statesmen},
  editor = {Sabella Ogbobode Abidde and Felix Kumah-Abiwu},
  publisher = {Springer International Publishing},
  year = {2023},
  pages = {1-10}
}

@article{edmond13,
title={Information Manipulation, Coordination, and Regime Change},
author={Chris Edmond},
journal={Review of Economic Studies},
year={2013},
volume={80},
number={4},
pages={1422--1458}
}

@article{feenstra2015,
  author  = {Feenstra, Robert C. and Inklaar, Robert and Timmer, Marcel P.},
  title   = {The Next Generation of the Penn World Table},
  journal = {American Economic Review},
  year    = {2015},
  volume  = {105},
  number  = {10},
  pages   = {3150--3182},
  doi     = {10.1257/aer.20130954}
}

@article{gassebner2016expect,
  author    = {Martin Gassebner and Jerg Gutmann and Stefan Voigt},
  title     = {When to Expect a Coup d'{E}tat? An Extreme Bounds Analysis of Coup Determinants},
  journal   = {Public Choice},
  year      = {2016},
  volume     = {169},
  number     = {3--4},
  pages      = {293--313},
  doi        = {10.1007/s11127-016-0365-0}
}

@article{heckman78, 
title={Dummy endogenous variables in a simultaneous equation system},
author={James Heckman},
journal={Econometrica},
year={1978},
volume={46},
number={4},
pages={931--959}
}

@article{HEINEMANN2024632,
title = {An experimental test of the global-game selection in coordination games with asymmetric players},
journal = {Journal of Economic Behavior and Organization},
volume = {218},
pages = {632-656},
year = {2024},
issn = {0167-2681},
doi = {https://doi.org/10.1016/j.jebo.2023.12.028},
url = {https://www.sciencedirect.com/science/article/pii/S0167268123004675},
author = {Frank Heinemann}
}

@article{helland21,
title={Information quality and regime change: Evidence from the lab},
author={Leif Helland and Felipe S. Iachan and Ragnar E. Juelsrud and Plamen T. Nenov},
journal={Journal of Economic Behavior and Organization},
volume={191},
year=2021,
pages={538--554}
}

@article{hollyer2015,
  title={Transparency, Protest, and Autocratic Instability},
  author={Hollyer, James R. and Rosendorff, B. Peter and Vreeland, James Raymond},
  journal={American Political Science Review},
  volume={109},
  number={4},
  pages={764--784},
  year={2015}
}

@article{jackman1978,
  author  = {Jackman, Robert W.},
  title   = {The Predictability of Coups d'Etat: A Model with African Data},
  journal = {American Political Science Review},
  year    = {1978},
  volume  = {72},
  number  = {4},
  pages   = {1262--1275}
}

@article{JenkinsKposowa1992,
  author = {J. Craig Jenkins and Augustine J. Kposowa},
  title = {The Political Origins of African Military Coups: Ethnic Competition, Military Centrality, and the Struggle Over the Postcolonial State},
  journal = {International Studies Quarterly},
  volume = {36},
  pages = {271--291},
  year = {1992}
}

@article{kocak24,
  title = {Collective procrastination and protest cycles},
  author = {Germ\'an Gieczweski and Korhan Kocak},
  year = {2024},
  journal = {American Journal of Political Science}, 
  volume={69}, 
  number={4}, 
  pages={1406-1419}
}

@article{Leon2014,
  author = {Gabriel Leon},
  title = {Loyalty for Sale? Military Spending and Coups d’État},
  journal = {Public Choice},
  volume = {159},
  pages = {363--383},
  year = {2014}
}

@article{Little2017,
  author = {Andrew T. Little},
  title = {Coordination, Learning, and Coups},
  journal = {Journal of Conflict Resolution},
  volume = {61},
  number = {1},
  pages = {204--234},
  year = {2017},
  doi = {10.1177/0022002714567953}
}

@article{Little2015,
  author = {Andrew T. Little},
  title = {Fraud and Monitoring in Non-competitive Elections},
  journal = {Political Science Research and Methods},
  volume = {3},
  number = {1},
  pages = {21--41},
  year = {2015},
  doi = {10.1017/psrm.2014.9}
}

@article{little2016,
  title={Communication Technology and Protest},
  author={Little, Andrew T.},
  journal={Journal of Politics},
  volume={78},
  number={1},
  pages={152--166},
  year={2016}
}

@article{LittleEtAl2015,
  author = {Andrew T. Little and Joshua A. Tucker and Tom LaGatta},
  title = {Elections, Protest, and Alternation of Power},
  journal = {The Journal of Politics},
  volume = {77},
  number = {4},
  pages = {1142--1156},
  year = {2015},
  doi = {10.1086/682569}
}

@article{londregan1990,
  author = {John B. Londregan and Keith T. Poole},
  title = {Poverty, the Coup Trap, and the Seizure of Executive Power},
  journal = {World Politics},
  volume = {42},
  number = {2},
  pages = {151--183},
  year = {1990},
  doi = {10.2307/2010462},
  url = {https://doi.org/10.2307/2010462}
}

@manual{marshall2020,
  title        = {Polity5: Political Regime Characteristics and Transitions, 1800--2018},
  author       = {Marshall, Monty G. and Gurr, Ted Robert},
  year         = {2020},
  organization = {Center for Systemic Peace},
  address      = {Vienna, VA},
  note         = {Dataset Users' Manual},
  url          = {https://www.systemicpeace.org/inscr/p5manualv2018.pdf}
}

@article{Masaki2016,
  author = {Takaaki Masaki},
  title = {Coups d’État and Foreign Aid},
  journal = {World Development},
  volume = {79},
  pages = {51--68},
  year = {2016}
}

@incollection{matzkin2007,
  author = {Matzkin, Rosa L.},
  title = {Nonparametric Identification},
  booktitle = {Handbook of Econometrics, Volume 6B},
  editor = {Heckman, James J. and Leamer, Edward E.},
  publisher = {Elsevier},
  pages = {5307--5368},
  year = {2007}
}

@incollection{morris2003,
  title = {Global {{Games}}: {{Theory}} and {{Applications}}},
  shorttitle = {Global {{Games}}},
  booktitle = {Advances in {{Economics}} and {{Econometrics}}: {{Proceedings}} of the {{Eight World Congress}} of the {{Econometric Society}}},
  author = {Morris, Stephen and Shin, Hyun Song},
  date = {2003},
  pages = {56},
  publisher = {{Cambridge University Press}}, 
  year=2003
}

@article{morshed23,
    author = {Morris, Stephen and Shadmehr, Mehdi},
    title = {Inspiring Regime Change},
    journal = {Journal of the European Economic Association},
    volume = {21},
    number = {6},
    pages = {2635-2681},
    year = {2023},
    issn = {1542-4766},
    doi = {10.1093/jeea/jvad023},
    url = {https://doi.org/10.1093/jeea/jvad023},
    eprint = {https://academic.oup.com/jeea/article-pdf/21/6/2635/54226976/jvad023\_teaching\_material\_morris\_and\_shadmehr.pdf},
}

@incollection{morrispal,
  title = {Global games},
  booktitle = {The New Palgrave Dictionary of Economics, Third Edition},
  author = {Stephen Morris},
  pages = {5324--5330},
  publisher = {Springer Nature}, 
  editor ={MacMillan Publishers Ltd},
  year=2018
}

@article{muller1970,
  author  = {Muller, Edward N. and Jukam, Thomas O. and Seligson, Mitchell A.},
  title   = {Diffuse Political Support and Antisystem Political Behavior: A Comparative Analysis},
  journal = {American Journal of Political Science},
  year    = {1982},
  volume  = {26},
  number  = {2},
  pages   = {240--264}
}

@article{muller1985,
  author  = {Muller, Edward N.},
  title   = {Income Inequality, Regime Repressiveness, and Political Violence},
  journal = {American Sociological Review},
  year    = {1985},
  volume  = {50},
  number  = {1},
  pages   = {47--61}
}

@article{OKane1981,
  author = {Rosemary H. T. O'Kane},
  title = {A Probabilistic Approach to the Causes of Coups d’État},
  journal = {British Journal of Political Science},
  volume = {11},
  pages = {287--308},
  year = {1981}
}

@misc{peyton2026cline,
  author       = {Peyton, Buddy and Bajjalieh, Joseph and Shalmon, Dan and Martin, Michael and Bonaguro, Jonathan and Althaus, Scott},
  title        = {{Cline Center Coup d'{E}tat Project Dataset Codebook}},
  year         = {2026},
  month        = feb,
  day          = {17},
  version      = {2.2.2},
  institution  = {Cline Center for Advanced Social Research, University of Illinois Urbana-Champaign},
  note         = {Cline Center Coup d'{E}tat Project Dataset},
  doi          = {10.13012/B2IDB-9651987\_V10}
}

@article{powell2012,
  author = {Jonathan M. Powell},
  title = {Determinants of the Attempting and Outcome of Coups d'État},
  journal = {Journal of Conflict Resolution},
  volume = {56},
  number = {6},
  pages = {1017-1040},
  year = {2012},
  doi = {10.1177/0022002712445732},
  url = {https://doi.org/10.1177/0022002712445732}
}

@article{PowellThyne2011,
  author = {Jonathan Powell and Clayton L. Thyne},
  title = {Global Instances of Coups from 1950 to 2010: A New Dataset},
  journal = {Journal of Peace Research},
  volume = {48},
  pages = {249--259},
  year = {2011}
}

@article{quinlivan99,
  author  = {Quinlivan, James T.},
  title   = {Coup-Proofing: Its Practice and Consequences in the Middle East},
  journal = {International Security},
  year    = {1999},
  volume  = {24},
  number  = {2},
  pages   = {131--165},
  publisher = {MIT Press},
  doi     = {10.1162/016228899560202}
}

@article{schroth2014,
  author  = {Schroth, Enrique and Suarez, Gustavo A. and Taylor, Lucian A.},
  title   = {Dynamic Debt Runs and Financial Fragility: Evidence from the 2007 ABCP Crisis},
  journal = {Journal of Financial Economics},
  year    = {2014},
  volume  = {112},
  number  = {2},
  pages   = {164--189},
  doi     = {10.1016/j.jfineco.2014.01.002}
}

@article{Sudduth2017,
  author = {Jun Koga Sudduth},
  title = {Coup Risk, Coup-Proofing, and Leader Survival},
  journal = {Journal of Peace Research},
  volume = {54},
  pages = {3--15},
  year = {2017}
}

@article{tamer03, 
title={Incomplete bivariate discrete response model with multiple equilibria},
author={Elie Tamer},
journal={Review of Economic Studies},
year={2003},
volume={70},
number={1},
pages={147--167}
}

@article{TysonSmith2018,
  author = {Scott A. Tyson and Alastair Smith},
  title = {Dual-Layered Coordination and Political Instability: Repression, Co-optation, and the Role of Information},
  journal = {The Journal of Politics},
  volume = {80},
  number = {1},
  pages = {45--61},
  year = {2018},
  doi = {10.1086/693986}
}

@article{vytlacil2002,
  author = {Vytlacil, Edward},
  title = {Independence, Monotonicity, and Latent Index Models: An Equivalence Result},
  journal = {Econometrica},
  volume = {70},
  number = {1},
  pages = {331--341},
  year = {2002}
}
\end{document}